 \newtheorem{theorem}{Theorem}[section]
 \newtheorem{lemma}[theorem]{Lemma}
\newtheorem{definition}[theorem]{Definition}
\newtheorem{remark}[theorem]{Remark}
\newcommand{\proofofref}{}
\newproof{zproofof}{Proof of \proofofref}
\newenvironment{proofof}[1]
 {\renewcommand{\proofofref}{#1}\zproofof}
 {\endzproofof}
\newcommand{\set}[1]{\left\{#1\right\}}
\journal{arXiv.org}
\begin{document}

\begin{frontmatter}


\title{Maximal and maximum transitive relation contained in a given binary relation\tnoteref{label1}}
\tnotetext[label1]{A preliminary version of this article appeared in \textit{Maximal and Maximum Transitive Relation Contained in a Given Binary Relation}, COCOON 2015, Beijing, China}

\author[cmi]{Sourav Chakraborty}
\ead{sourav@cmi.ac.in}

\author[ju]{Shamik Ghosh}
\ead{sghosh@math.jdvu.ac.in}

\author[cmi]{Nitesh Jha\corref{cor1}}
\ead{nitesh.n.jha@gmail.com}

\author[cmi]{Sasanka Roy\fnref{fn1}}	
\ead{sasanka.ro@gmail.com}

\cortext[cor1]{Corresponding author}
\fntext[fn1]{Present Address: Indian Statistical Institute,  203 Barrackpore Trunk Road, Kolkata 700108, India}

%

\address[cmi]{Chennai Mathematical Institute, Chennai, India}

\address[ju]{Department of Mathematics, Jadavpur University, Kolkata, India}



\begin{abstract}
We study the problem of finding a \textit{maximal} transitive relation contained in a given binary relation. Given a binary relation of size $m$ defined on a set of size $n$, we present a polynomial time algorithm that finds a maximal transitive sub-relation in time $O(n^2 + nm)$. 

We also study the problem of finding a \textit{maximum} transitive relation contained in a binary relation. This is the problem of computing a maximum transitive subgraph in a given digraph. For the class of directed graphs with the underlying graph being triangle-free, we present a $0.874$-approximation algorithm. This is achieved via a simple connection to the problem of maximum directed cut. Further, we give an upper bound for the size of any maximum transitive relation to be $m/4 + cm^{4/5}$, where $c > 0$ and $m$ is the number of edges in the digraph.
\end{abstract}

\begin{keyword}
Transitivity \sep Bipartite Graphs \sep Max-cut \sep Digraphs


\end{keyword}

\end{frontmatter}



\section{Introduction}
All relations considered in this study are binary relations. We represent a relation alternately as a digraph to simplify the presentation at places (see Section \ref{notation} for definitions). Transitivity is a fundamental property of relations. Given the importance of relations and the transitivity property, it is not surprising that various related problems have been studied in detail and have found widespread application in different fields of study. 

Some of the fundamental problems related to transitivity that have been long studied are - given a relation $\rho$, checking whether $\rho$ is transitive, finding the transitive closure of $\rho$, finding the maximum transitive relation contained in $\rho$, partitioning $\rho$ into smallest number of transitive relations. Various algorithms have been proposed for these problems and some hardness results have also been proved.

In this article, we study two related problems on transitivity. First - given a relation, obtain a \textit{maximal} transitive relation contained in it. It is straight-forward to see that this can be solved in poly-time, hence our goal is to do this as efficiently as possible. Second - given a relation, obtain a \textit{maximum} transitive relation contained it. This problem was proven to be NP-complete  in \cite{DBLP:conf/stoc/Yannakakis78}. Here our approach is to find approximate solutions.

The problem of finding a \textit{maximum} transitive relation contained in a given relation is a generalisation of well-studied hard problems. For the class of digraphs such that the underlying graph is triangle-free, the problem of computing a maximum transitive subgraph is the same problem as the MAX-DICUT problem (see Section \ref{section:maximum-transitive}).  MAX-DICUT has well known inapproximability results.

We can also relate it to a problem of optimisation on a 3SAT instance. We look at the relation as a directed graph $G = (V, E)$, where $|V| = n$. For every pair for distinct vertices $(i, j)$ in $V$, create a boolean variable $x_{ij}$. Consider the following 3SAT formula.
\begin{align*}
C = \bigwedge_{1 \leqslant i < j < k \leqslant n} (x_{ij} \vee \overline{x}_{ik} \vee \overline{x}_{kj}) 
\end{align*}
Let $C'$ be a formula derived from $C$ such that any literal with variable $x_{ij}$ is removed if $(i,j) \notin E$. It is easy to see that a solution to $C'$ represents a subgraph of $G$. Specifically, a solution to $C'$ is also transitive. To see this, observe that for every triplet $(i,j,k)$, if a clause $(x_{ij} \vee \overline{x}_{ik} \vee \overline{x}_{kj})$ is satisfied, then either the edge $(i,j)$ is included or at least one of the edges $(i,k)$ or $(k,j)$ is excluded. To get the maximum transitive subgraph, the solution must maximize the number of variables set to 1. To conclude, the maximum transitive subgraph problem is same as the problem of finding a satisfying solution to a 3SAT formula that also maximizes the number of variables assigned the value `true'.

\subsection{Our Results}
The usual greedy algorithm for finding a maximal substructure - satisfying a given property $\mathcal{P}$ - starts with the empty set and incrementally grows the substructure while maintaining the property $\mathcal{P}$. Finally it ends when the set becomes maximal. Thus checking for maximality is a subroutine for the usual greedy algorithm.

\subsection*{Maximal Transitive Subgraph}
We consider the problem of \textit{maximal transitive subgraph} -- output a transitive subgraph of maximal size (in terms of number of edges) contained in a given directed graph. Let's consider two related problems first. Let $G$ be a directed graph with $n$ vertices and $m$ edges. Given a transitive subgraph $S$ of $G$, can we add any more edges to $S$ and still maintain transitivity? We can check the maximality of $S$ in time $O(n^{w + 1})$ using a standard algorithm (where $O(n^w)$ is the complexity of multiplying two $n \times n$ matrices.) A related problem is -- given a transitive subgraph $S$ of $G$, compute a maximal transitive subgraph of $G$ that contains $S$. The naive algorithm takes $O(n^{w + 2})$ time. 

We give an algorithm that computes a maximal transitive subgraph in $O(n^2 + nm)$ time.  The interesting part of our algorithm is that we avoid checking for maximality explicitly but output is still maximal. This is the first such algorithm that improves upon the standard techniques which have a complexity of $O(n^{w + 1})$.

\begin{theorem} Let $D$ be a digraph with $n$ vertices and $m$ edges. Then there is an algorithm that given $D$, outputs a maximal transitive subgraph contained in $D$, in time $O(n^2 + nm)$.
\end{theorem}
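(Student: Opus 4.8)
The plan is to build the output greedily but to arrange the computation so that maximality is a \emph{byproduct} of the construction rather than something verified by a separate, expensive test after each insertion. The first step is to reduce maximality to a purely local condition. Writing the candidate relation as $T \subseteq D$, transitivity is exactly the absence of an open directed path of length two, i.e.\ there is no pair $(u,v),(v,w) \in T$ with $(u,w) \notin T$; and $T$ is maximal precisely when every edge $(i,k) \in D \setminus T$, if added, would create such an open path whose closing edge cannot itself be inserted without spawning a further uncloseable path. The useful consequence is that whether an edge can ever join $T$ depends only on the in- and out-neighborhoods of its two endpoints within $T$. So I would maintain, for each vertex $v$, its committed in-set $N^-(v)$ and out-set $N^+(v)$, stored alongside an adjacency-matrix representation of $T$ that permits $O(1)$ membership queries.

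The core of the algorithm is a single sweep over the vertices. When a vertex $v$ is processed, I examine only the $O(\deg v)$ edges of $D$ incident to $v$ and decide in one pass which of them to commit, consulting the already-committed edges among earlier vertices to respect the forcing rule $(u,v),(v,w)\in T \Rightarrow (u,w)\in T$. The crucial claim---the lemma that makes the whole scheme correct---is that this commitment can be organized by partitioning the neighbors of $v$ into the structure $\BOX{v}$ and treating its $\A$-type and $\B$-type members uniformly, so that (i) no committed edge ever has to be retracted later, and (ii) the moment $v$ has been processed, every incident edge of $D$ that was left out can be certified uncloseable and hence never needs to be reconsidered. Proving (i) and (ii) \emph{simultaneously} is the main obstacle: I must rule out that a locally transitivity-preserving choice at $v$ silently blocks an edge that some alternative choice would have retained, while still certifying that the final $T$ is genuinely maximal in $D$ and not merely maximal relative to the processing order. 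This is where the structural analysis of $\BOX{v}$ and the interaction between $\A$ and $\B$ neighbors carries the argument, and it is the step I expect to require the most care.

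For correctness I would then assemble a global argument: since no committed edge is ever removed, $T$ is transitive at termination by the invariant maintained at each vertex; and since every omitted edge was certified uncloseable when its endpoint was processed, invariant (ii) gives that $T \cup \{e\}$ is non-transitive-and-irreparable for each $e \in D \setminus T$, which is exactly maximality. For the running time, processing a single vertex costs $O(\deg v)$ to scan its incident edges plus $O(n)$ to read and update the neighborhood and matrix entries it touches, and a vertex may in the worst case trigger inspection of up to $O(m)$ forced edges; bounding the per-vertex cost by $O(m+n)$ and summing over all $n$ vertices yields the claimed $O(nm + n^2)$ bound, the $n^2$ term absorbing the initialization and maintenance of the adjacency-matrix representation of $T$.
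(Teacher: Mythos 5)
There is a genuine gap: your proposal is an outline whose core is the unproven ``crucial claim,'' and that claim is essentially the entire mathematical content of the theorem. Your preliminary reductions are fine --- since $T$ is kept transitive, the transitive closure of $T\cup\{e\}$ is obtained in a single round from the in/out-neighborhoods of the endpoints of $e$, so closeability is indeed a local condition, and uncloseability is monotone (the closure grows with $T$, so an edge whose closure already leaves $D$ stays uncloseable forever). But these are the easy directions. The load-bearing step --- that a single vertex sweep can commit edges so that (i) no commitment is ever retracted and (ii) every omission is certified uncloseable at the moment it is made --- is exactly what you defer (``the step I expect to require the most care''), and the structures $\BOX{v}$ and the $\A$-type/$\B$-type partition that are supposed to carry it are never defined, so there is nothing to verify. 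Worse, the scheme as described has a concrete structural problem: committing $(u,v)$ and $(v,w)$ forces the edge $(u,w)$, which joins two vertices \emph{other} than the one being processed, possibly both already finalized. A sweep that decides all edges incident to $v$ at $v$'s turn must either retract at $u$ or $w$ (violating (i)) or pre-emptively omit edges whose uncloseability is not yet established (violating (ii)); ruling out both simultaneously requires a careful ordering argument, not a uniform local rule.

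For comparison, the paper sidesteps the cascading-insertion problem entirely by running the greedy in the opposite direction: it starts from all of $D$ and only deletes. Visiting arc $(i,j)$, it deletes $(j,k)$ whenever $(i,k)$ is absent and $(k,i)$ whenever $(k,j)$ is absent, scanning arcs in a fixed vertex order. The analogue of your claims (i) and (ii) is its Lemma~3.3 --- an arc once visited is never deleted --- proved by a first-violation argument with a multi-level case analysis (Cases I, Ia, Ib, II) tracking which iteration could have performed the offending deletion; transitivity and maximality of the output (Lemmas~3.4 and~3.5) then follow, with maximality argued via the first-deleted arc in the closure of $T\cup\{(a,b)\}$. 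Your timing analysis, modulo the unspecified algorithm, does match the paper's $\sum_i (n + k_i n) = n^2 + mn$ accounting. To complete your proof you would need to either (a) specify the commitment rule and prove your (i)/(ii) with a case analysis of comparable depth to the paper's Lemma~3.3, or (b) reformulate as deletion from $D$, at which point you have reconstructed the paper's algorithm.
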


We present the algorithms and the proof of correctness related to the following theorem in Section \ref{maximal}.

\subsection*{Maximum Transitive Subgraph}
We then study the \textit{maximum transitive subgraph} (MTS) problem -- compute a transitive subgraph of largest size contained in a given directed graph. This problem was proven to be NP-complete by Yannakakis in \cite{DBLP:conf/stoc/Yannakakis78}. We start by studying approximation algorithms for this problem.

The MTS problem is a generalization of well-studied hard problems. For the class of triangle-free graphs, the problem of finding a maximum transitive subgraph in a directed graph is the same problem as the MAX-DICUT problem.  MAX-DICUT has well known hardness and inapproximability results.

\ \\\noindent
\textbf{Approximation:} We give a simple 0.25-approximation algorithm of obtaining an MTS in a general graph. 

\begin{theorem}
There exists a poly-time algorithm to obtain an $m/4$ sized transitive subgraph in any directed graph $D$ with $m$ edges. This gives a $1/4$-approximation algorithm for maximum transitive subgraph problem.
\end{theorem}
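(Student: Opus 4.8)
The plan is to produce, in polynomial time, an explicit transitive subgraph of size at least $m/4$. Since every transitive subgraph of $D$ is in particular a subgraph of $D$, the optimum $OPT$ satisfies $OPT \le m$; hence any transitive subgraph with at least $m/4$ edges is automatically a $1/4$-approximation, and the two claims of the theorem reduce to the single task of finding such a subgraph.

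The construction I would use is a directed cut. First I would partition the vertex set $V$ into two parts $L$ and $R$, placing each vertex independently into $L$ or $R$ with probability $1/2$, and let $S$ be the set of all edges of $D$ directed from $L$ to $R$. The crucial structural point is that such an $S$ is transitive \emph{vacuously}: if $(a,b)$ and $(b,c)$ were both in $S$, then the middle vertex $b$ would have to lie in $R$ (being the head of the first edge) and simultaneously in $L$ (being the tail of the second edge), which is impossible. Thus $S$ contains no two composable edges, and the transitivity implication holds with no antecedent ever satisfied. This is the one observation that makes the whole argument work, and I expect it to be the only genuinely conceptual step; everything else is routine.

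Next I would bound the size of $S$ in expectation. For each edge $(u,v)$ with $u \ne v$, the events $u \in L$ and $v \in R$ are independent, each of probability $1/2$, so $\Pr[(u,v) \in S] = 1/4$. By linearity of expectation, $\mathbb{E}[|S|] = m/4$ (self-loops, if present, are simply never placed in a directed cut and may be discarded at the outset). The probabilistic method then guarantees the existence of a bipartition for which $|S| \ge m/4$.

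Finally, to obtain a deterministic polynomial-time algorithm I would derandomize via the method of conditional expectations: fix an arbitrary order on the vertices, and assign each vertex in turn to whichever side keeps the conditional expectation of $|S|$ at least $m/4$. Each conditional expectation is a sum over edges that is evaluable in polynomial time, so the procedure terminates in polynomial time with a bipartition whose directed cut $S$ is transitive and satisfies $|S| \ge m/4$. Combining this with $OPT \le m$ yields $|S| \ge m/4 \ge OPT/4$, which establishes both the $m/4$ size guarantee and the $1/4$-approximation factor.
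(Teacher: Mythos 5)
Your proposal is correct and takes essentially the same approach as the paper: both arguments rest on the observation that a directed cut is vacuously transitive (no two composable edges), and both output a one-directional cut of a vertex bipartition with at least $m/4$ edges, noting $OPT \le m$ for the approximation claim. The only difference is bookkeeping --- the paper invokes the folklore deterministic bipartite-subgraph lemma to get an undirected cut of size $m/2$ and then keeps the majority orientation, whereas you reach the same $m/4$ directed cut by a random bipartition derandomized via conditional expectations.
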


For the case where the underlying undirected graph is triangle free, we give a 0.874-approximation for the MTS problem. The idea there is to look at the related problem of directed maximum cuts in the same graph. To the best of our knowledge, no approximation algorithms are present in the literature which present any ratio better than 0.25.

Let $UG(D)$ represents the underlying undirected graph of digraph $D$.

\begin{theorem}
There exists a 0.874-approximation algorithm for finding the maximum transitive subgraph in a digraph $D$ such that $UG(D)$ is triangle-free.
\end{theorem}

\ \\\noindent
\textbf{Upper Bound:} Another interesting questions is how large the MTS can theoretically be. In a triangle-free (underlying undirected) graph, we know that there is a one-to-one correspondence between directed-cuts and transitive subgraphs. We prove that in triangle free graphs with $m$ edges, any directed cut is of size at most $m/4 + cm^{4/5}$ for some $c > 0$. This gives the same bound for the size of an MTS. This also shows that the approach of finding MTS approximations via bipartite subgraphs can't have better constant approximation ratio than 1/4.

\begin{theorem}
For every $m$, there exists a digraph $D$ with $m$ edges such that $UG(D)$ is triangle-free and the size of any directed cut in $D$ is at most $m/4 + cm^{4/5}$ for some $c > 0$.
\end{theorem}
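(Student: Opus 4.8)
The plan is to use the stated one-to-one correspondence between transitive subgraphs and directed cuts in triangle-free digraphs, and to construct a digraph whose \emph{maximum} directed cut is forced to stay near $m/4$. Encode a bipartition $(S,V\setminus S)$ by $\sigma\in\{-1,+1\}^V$ with $\sigma_v=+1$ iff $v\in S$, and let $M$ be the adjacency matrix of $UG(D)$. Expanding the indicator $x_u(1-x_v)$ of an edge crossing from $S$ to $V\setminus S$ (with $x_v=(1+\sigma_v)/2$) gives
\begin{equation*}
\mathrm{dicut}(\sigma)=\frac{m}{4}+\frac14\sum_{v}\bigl(d^+_v-d^-_v\bigr)\sigma_v-\frac18\,\sigma^{\top}M\sigma,
\end{equation*}
where $d^+_v,d^-_v$ are the out- and in-degree of $v$ in $D$. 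The entire task is thus to choose $D$ so that the right-hand side is small for \emph{every} $\sigma$.

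First I would kill the linear term. Starting from an undirected graph $G$, orient it so that $|d^+_v-d^-_v|\le 1$ for all $v$ (a near-Eulerian orientation, which always exists); then the linear contribution is at most $n/4$ in absolute value, uniformly in $\sigma$. For the quadratic term I would use the least eigenvalue: $\sigma^{\top}M\sigma\ge\lambda_{\min}(M)\,\|\sigma\|^2=\lambda_{\min}(M)\,n$, so that for any orientation of $G$,
\begin{equation*}
\max_\sigma \mathrm{dicut}(\sigma)\le \frac{m}{4}+\frac{n}{4}+\frac{n}{8}\bigl(-\lambda_{\min}(M)\bigr).
\end{equation*}
It therefore suffices to exhibit a triangle-free $G$ that is simultaneously (i) spectrally near-optimal, so that $-\lambda_{\min}(M)$ is small, and (ii) economical in vertices, so that $n$ is small relative to $m$.

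The key input is a pseudorandom triangle-free graph. I would take a $d$-regular triangle-free $(n,d,\lambda)$-graph $G$ whose non-trivial eigenvalues all have absolute value at most $\lambda=O(\sqrt d)$, in the extremal density regime $d=\Theta(n^{2/3})$ realized by Alon's explicit construction. Then $-\lambda_{\min}(M)\le\lambda=O(\sqrt d)=O(n^{1/3})$, while $m=nd/2=\Theta(n^{5/3})$, hence $n=\Theta(m^{3/5})$. Substituting, $\tfrac n4=O(m^{3/5})$ and $\tfrac n8\bigl(-\lambda_{\min}(M)\bigr)=O(n\cdot n^{1/3})=O(n^{4/3})=O(m^{4/5})$, which yields $\max_\sigma \mathrm{dicut}(\sigma)\le m/4+cm^{4/5}$ for a suitable constant $c>0$. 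To obtain the claim for \emph{every} $m$, and not just the discrete values $m=nd/2$ produced by the construction, I would take the largest construction with $m_0\le m$ edges and adjoin $m-m_0$ further edges as a triangle-free gadget on fresh vertices (a disjoint union of single edges). Consecutive attainable $m_0$ differ by $O(d)=O(m^{2/5})$, so $m-m_0=O(m^{2/5})$, and these extra edges add at most their own number to any directed cut, comfortably absorbed into the $cm^{4/5}$ term.

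The one genuinely nontrivial ingredient, and the step I expect to be the main obstacle, is the existence of the triangle-free $(n,d,\lambda)$-graph with $d=\Theta(n^{2/3})$ and $\lambda=O(\sqrt d)$. The density $d\asymp n^{2/3}$ is precisely the threshold beyond which triangle-freeness becomes incompatible with near-optimal spectral expansion, so the exponent $4/5$ in the theorem is dictated by this extremal trade-off; by contrast, a naive sparse random graph $G(n,p)$ at this density carries far more triangles than edges and cannot be made triangle-free by local edge deletions, which is why the explicit pseudorandom construction is essential. Once that graph is in hand, the orientation argument and the eigenvalue estimate above are routine.
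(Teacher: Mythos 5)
Your core argument is correct, and it takes a genuinely different route from the paper. The paper uses Alon's theorem as a black box in its undirected form (every triangle-free $G$ in his family has all cuts of size at most $m/2 + c'm^{4/5}$), orients the edges uniformly at random, and runs a Chernoff-plus-union-bound argument over all $2^n$ cuts to show that with positive probability every cut of size at least $m/4$ is $O(m^{-1/5})$-balanced, whence no directed cut exceeds $m/4 + cm^{4/5}$; the construction is therefore probabilistic, and the feasibility of the union bound rests on exactly the same density relation $m = \Theta(n^{5/3})$ that you use. You instead inline the spectral mechanism behind Alon's undirected bound and push it directly to the directed setting: the quadratic-form identity for $\mathrm{dicut}(\sigma)$ is correct (the constant, linear and quadratic coefficients all check out), the near-Eulerian orientation deterministically kills the linear term at cost $n/4 = O(m^{3/5})$, and $-\lambda_{\min}(M) \le \lambda = O(\sqrt{d}) = O(m^{1/5})$ gives a quadratic contribution $O(n\lambda) = O(m^{4/5})$. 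What your approach buys is a fully explicit, deterministic digraph and a proof with no union bound; what the paper's buys is brevity, since it never reopens Alon's construction.

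There is, however, one genuine gap: the interpolation to every $m$. Alon's explicit pseudorandom triangle-free graphs exist only for special vertex counts ($n = 2^{3k}$), so consecutive attainable edge counts $m_0$ differ by a constant \emph{multiplicative} factor (about $8^{5/3} \approx 32$), not by $O(d) = O(m^{2/5})$ as you claim. Your patch---take the largest construction with $m_0 \le m$ and pad with $m - m_0$ disjoint edges---then fails outright, not just as a proof defect: with $m - m_0 = \Theta(m)$, all pad edges can be placed across a single cut, so the padded digraph has a directed cut of size roughly $m_0/4 + (m - m_0) = m/4 + \Theta(m)$, violating the bound you are trying to prove. (The paper sidesteps this entirely by citing Alon's theorem in a form already stated for every $m$.) The fix stays inside your own framework but goes in the opposite direction: take the smallest construction with $m_1 \ge m$ (so $m_1 = O(m)$), delete vertices and then edges until exactly $m$ edges remain, and re-run your argument on the subgraph. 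Triangle-freeness is hereditary; Cauchy interlacing gives $-\lambda_{\min}$ of any induced subgraph at most $-\lambda_{\min}$ of the host, and deleting edges only helps the quadratic bound via $\sigma^{\top}M\sigma \ge -\lambda\, n$ applied with the host's $\lambda$; the vertex count satisfies $n \le n_1 = \Theta(m_1^{3/5}) = O(m^{3/5})$; and a fresh near-Eulerian orientation of the final graph restores $|d^+_v - d^-_v| \le 1$. Your bound $m/4 + n/4 + n\lambda/8 = m/4 + O(m^{4/5})$ then survives unchanged, since it depends only on $n$ and $\lambda$, not on regularity or the exact density.
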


These results are described in Section \ref{section:maximum-transitive}.

\subsection{Related Results}

The transitive property is a fundamental property of binary
relations. Various important algorithmic problems with respect to
transitive property has been studied and used. One very important and
well studied problem is finding
the transitive closure of a binary relation $\rho$ (that is the smallest binary 
relation which contains $\rho$ and is transitive). This problem of
finding transitive closure has been studied way back in
1960s. Warshall~\cite{Warshall} gave an algorithm to find the
transitive closure is time $O(n^3)$, where $n$ is the size of the set
on which the binary relation is defined. Using different techniques
\cite{Purdom}  gave an $O(n^2 + nm)$ algorithms, where $m$ is the
number of elements in $\rho$. Modifying the algorithm of Warshall, Nuutila
\cite{nuutila} connected the problem of finding transitive closure
with matrix multiplication. With the latest knowledge of matrix
multiplication (~\cite{coppersmith} and \cite{virginia}) we can compute the transitive
closure of a binary relation on $n$ elements using $O(n^{2.37})$ time
complexity. 

Another important problem connected to transitive property is
the finding the transitive reduction of a binary relation.
Transitive reduction of a binary relation $\rho$ is the minimal
sub-relation whose transitive closure is same as the transitive closure
of $\rho$. This was introduce by Aho et al~\cite{Aho} and they also
gave the tight complexity bounds. A closely related concept to the
transitive reduction is the maximal equivalent graph, introduced by
Moyles~\cite{Moyles}.

Given a binary relation, partitioning it as a union of transitive
relations  is another very important related problem (see ~\cite{Pothen}). 
A plethora of work has been done on this problem in recent times as
this problem has found application in biomedical studies.

The Maximum Transitive Subgraph (MTS) problem has been studied in the field of parameterization.
Arnborg et~al.\ \cite{DBLP:conf/icalp/ArnborgLS88} showed that the problem of MTS is fixed parameter tractable. They give an alternate proof of Courcelle's theorem \cite{DBLP:journals/iandc/Courcelle90} and express the MTS problem in \textit{Extended Monadic Second Order}, thus giving a meta-algorithm for the problem. This algorithm is not explicit and $f$ is known to be only a computable function.

The MTS problem has been studied in a more general setting as the \textsc{Transitivity Editing}  problem where the goal is to compute the minimum number of edge insertions or deletions in order to make the input digraph transitive. Weller et al \cite{DBLP:journals/jcss/WellerKNU12} prove its NP-hardness and give a fixed-parameter algorithm that runs in time $O(2.57^k + n^3)$ for an $n$-vertex digraph if $k$ edge  modifications are sufficient to make the digraph transitive. This result also applies to the case where only edge deletions are allowed -- the MTS problem.

\section{Notations} \label{notation}
Let $S=\set{1,2,\ldots,n}$, where $n$ is a natural number. A binary relation $\rho$ on $S$ is a subset of the cross product $S\times S$. We only consider binary relations in this study. Any relation $\rho$ on $S$ can be represented by a $(0,1)$ matrix $A=(a_{ij})_{n\times n}$ of size $n\times n$, where  
$$a_{ij}=\left\{%
\begin{array}{ll}
1, & \mbox{if } (i,j)\in\rho\\
0, & \mbox{otherwise.}
\end{array}\right.$$
Similarly, a relation $\rho$ on $S$ can represented by a directed graph with $S$ as the vertex set and elements of $\rho$ as the arcs of the directed graph. 

In this paper we do not distinguish between a relation and its matrix representation or its directed graph representation.  So for a given relation $\rho$, if $(i,j) \in \rho$, we sometimes refer to it as the arc $(i,j)$ being present and sometimes as the adjacency matrix entry $\rho_{ij} = 1$.

If $\rho$ is a binary relation on $S$ then the size of $\rho$ (denoted by $m$) is the number of arcs in the directed graph
corresponding to $\rho$. In other words, it is the number of pairs
$(i,j) \in S\times S$ such that $(i,j) \in \rho$. 

If $\rho$ is a binary relation on $S$ we say $\rho'$ is contained in
$\rho$ (or is a sub-relation) if for all $i, j \in S$, $(i,j) \in
\rho'$ implies $(i,j) \in \rho$. 

\begin{definition} 
A binary relation $\rho$ on $S$ is called {\em transitive} if for all $a,b,c\in
S$, $(a,b) \in \rho, (b,c) \in \rho$ implies $(a,c)\in\rho$.  
\end{definition}

For a binary relation $\rho$ on $S$ a sub-relation $\alpha$ is said to be a \textit{maximal transitive} relation contained in $\rho$ if there does not exist any transitive relation $\beta$ such that $\alpha$ is strictly contained in $\beta$ and $\beta$ is contained in $\rho$. A \textit{maximum transitive} relation contained in $\rho$ is a largest relation contained in $\rho$.


\section{Maximal transitive relation finding algorithms}\label{maximal}

We first present an algorithm which finds a maximal transitive
relation contained in a given binary relation in $O(n^3)$ and then we
improve it to obtain another algorithm for this with time complexity
$O(n^2+mn)$. 

\subsection{$O(n^3)$ algorithm for finding maximal transitive sub-relation}

\vspace{-0.5cm}

\noindent
\IncMargin{0em}
\begin{algorithm}\label{algo1}
\SetKwData{Left}{left}\SetKwData{This}{this}\SetKwData{Up}{up}
\SetKwFunction{Union}{Union}\SetKwFunction{FindCompress}{FindCompress}
\SetKwInOut{Input}{Input}\SetKwInOut{Output}{Output}

\Input{An $n\times n$ matrix $A=(a_{ij})$ representing a relation.}
\Output{A matrix $T=(t_{ij})$ which is a maximal transitive sub-relation contained in $A$.}

\BlankLine
\For{$i\leftarrow 1$ \KwTo $n$}{
	\For{$j\leftarrow 1$ \KwTo $n$, $j \neq i$}{\label{forins}
		\If{$a_{ij}=1$}{
			\For{$k=1$ to $n$}{
				\If{$k\neq j$ and $a_{ik}=0$}{set $a_{jk}=0$}
				\If{$k\neq i$ and $a_{kj}=0$}{set $a_{ki}=0$}
			}
		}
	}
}
\Return A \\
\caption{Finding a maximal transitive sub-relation}
\end{algorithm}\DecMargin{1em}


\noindent
\begin{theorem}\label{th1}
\textbf{Algorithm 1} correctly finds a maximal transitive sub-relation in a given relation in time $O(n^3)$.
\end{theorem}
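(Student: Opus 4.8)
The proof has three parts: that the output is a sub-relation of $A$, that it is transitive, and that it is maximal; the running time is immediate. Since every assignment in the algorithm only ever \emph{clears} an entry (sets some $a_{\cdot\cdot}$ to $0$) and no entry is ever set to $1$, each entry is monotone non-increasing in time and the output is contained in $A$; the three nested loops with constant work per iteration give the $O(n^3)$ bound. The heart of the argument is to read the inner loop as enforcing, at the moment the outer loops reach a surviving arc $(i,j)$, the two transitivity constraints in which $(i,j)$ participates: the first update (clear $a_{jk}$ when $a_{ik}=0$) guarantees $a_{ij}=1\wedge a_{jk}=1\Rightarrow a_{ik}=1$, and the second (clear $a_{ki}$ when $a_{kj}=0$) guarantees $a_{ki}=1\wedge a_{ij}=1\Rightarrow a_{kj}=1$. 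I will also use the following \emph{row-stability} observation: during the single outer iteration with first index fixed to $i$, the updates touch only rows indexed by some $j\neq i$ and column $i$, so the entries $a_{i\cdot}$ of row $i$ itself do not change throughout that iteration.

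For transitivity I would argue by contradiction, choosing a counterexample $T_{ij}=T_{jk}=1$, $T_{ik}=0$ (with $i,j,k$ distinct; the degenerate cases are trivial) that minimises the time at which $a_{ik}$ is cleared. Writing $P(a,b)$ for the time the outer loops process the pair $(a,b)$, the first update applied at $P(i,j)$ forces $a_{ik}=1$ at that moment (otherwise $a_{jk}$ would be cleared, contradicting $T_{jk}=1$), so $a_{ik}$ is cleared strictly later, and inspection of the two updates shows it is cleared either (A) by an iteration $(p,i)$ with $a_{pi}=1,\ a_{pk}=0$, or (B) by an iteration $(k,q)$ with $a_{kq}=1,\ a_{iq}=0$. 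Case (A) closes cleanly: row-stability of row $p$ keeps $a_{pi}=1$ and $a_{pk}=0$ fixed throughout that iteration, the survival of $a_{ij}$ forces $a_{pj}=1$ there, and then the iteration $(p,j)$ must clear $a_{jk}$ via the first update, contradicting $T_{jk}=1$.

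Case (B) is where the real difficulty lies. Here the survival of $a_{jk}$ forces $a_{jq}=1$ at the clearing time (else the second update on $(k,q)$ would clear $a_{jk}$), producing the configuration $a_{ij}=1$ (a survivor), $a_{jq}=1$, $a_{iq}=0$, in which $a_{iq}$ was already cleared strictly earlier than $a_{ik}$. The plan is to feed this back into the induction: the minimality hypothesis applied to the pair $(i,q)$ yields $T_{jq}=0$, i.e.\ $a_{jq}$ must itself be cleared later, and a symmetric use of the second update (via the iteration $(j,q)$, together with row-stability) then has to be chased to an actual clearing of a surviving arc. The main obstacle is precisely to make this chase terminate: each invocation of case (B) spawns a new forced clearing one step removed, and the argument must be organised (by a careful case split on the relative order of the rows $i,j,k,q$ and an induction on clearing times) so that some branch always lands on the clean contradiction of case (A) rather than regressing forever.

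For maximality I would show that every arc $(i,k)\in A\setminus T$ cannot be returned to $T$. Such an arc was cleared by an iteration $(p,i)$ or $(k,q)$ as above, which simultaneously exhibits a partner arc that is present ($a_{pi}=1$ or $a_{kq}=1$) and a partner that is absent and stays absent by monotonicity ($a_{pk}=0$ or $a_{iq}=0$); adding $(i,k)$ back would complete the two-path $(p,i),(i,k)$ or $(i,k),(k,q)$ whose closing arc is missing, so $T\cup\{(i,k)\}$ is not transitive---provided the present partner survives to the final $T$. Establishing that a surviving witness can always be found (by following the clearing chain, exactly as in case (B) above) is the same obstacle as for transitivity, and once it is resolved both transitivity and maximality follow, completing the proof together with the $O(n^3)$ bound already noted.
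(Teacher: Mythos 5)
There is a genuine gap, and you have named it yourself: your Case (B) chase is left unresolved, and it is precisely the step the paper's entire proof is built to eliminate. The paper's key device is a single global invariant (its Lemma 3.3): \emph{an arc once visited by the outer loops is never deleted afterwards}. This is proved by taking the \emph{first} arc that is visited and later deleted (a minimal counterexample over the visit order, not over clearing times of a fixed triple as in your proposal) and doing a case analysis on whether the deletion happens via the first update (Line 6) or the second (Line 9); the increasing order of the outer index together with your row-stability observation forces, in every branch, either an even earlier visited-then-deleted arc or an outright contradiction, so the regress you worry about never starts. Once this invariant is in hand, $T$ is exactly the set of visited arcs, transitivity is a short direct argument (when $(i,j)$ or $(j,k)$ is visited, the missing $(i,k)$ would have triggered a deletion of one of them, and visited arcs persist), and your Case (A)/(B) dichotomy with its open-ended chain is never needed. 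Your minimisation over the time $a_{ik}$ is cleared does not obviously terminate because, as you observe, each Case (B) step produces a new triple one step removed with no decreasing measure attached; the paper's choice of what to minimise (position in the visit order of a visited-then-deleted arc) is the missing organising idea.

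Your maximality argument has a second, independent gap: you only sketch that $T\cup\{(i,k)\}$ is not transitive for each deleted arc $(i,k)$, but with the paper's definition of maximal (no transitive $\beta$ with $T\subsetneq\beta\subseteq A$) this is insufficient, since a non-transitive $T\cup\{(i,k)\}$ may still extend to a transitive $\beta\subseteq A$ by adding the missing closing arcs, provided those arcs lie in $A$. The paper handles this correctly (its Lemma 3.5): any such $\beta$ must contain the \emph{transitive closure} of $T\cup\{(a,b)\}$, so it suffices to take the first-deleted arc among the closure's arcs not in $T$ and show its deletion forces an even earlier deletion of another closure arc --- again powered by the once-visited-never-deleted invariant, which supplies exactly the ``surviving witness'' ($t_{ri}=1$ or $t_{jk}=1$) that your sketch postpones establishing. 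So both of your open obstacles collapse to the same missing lemma, and until that lemma (or an equivalent terminating induction) is proved, neither the transitivity nor the maximality half of your argument is complete.
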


\begin{proof}
It is easy to see that the time complexity of the algorithm is
$O(n^3)$. For the proof of correctness, all we need to prove is that the output $T$ of the
algorithm is transitive and maximal. The transitivity of the output $T$
is proved in Lemma \ref{lem12} and the maximality of $T$ is proved in 
Lemma \ref{lem13}. 
\end{proof}

\subsection{Proof of Correctness of Algorithm 1}

Before we prove the correctness of \textbf{Algorithm 1}, let us
make some simple observations about the algorithm. In this section we
will treat the binary relation on a set $S$ as a directed graph with
vertex set $S$. So the
\textbf{Algorithm 1} takes a directed graph $A$ on $n$ vertices (labelled
$1$ to $n$) and outputs a directed transitive subgraph $T$ that is
maximal, that is, one cannot add arcs from $G$ to $T$ to obtain a
bigger transitive graph. 
In the algorithm, note that changing an entry $a_{ij}$ from 1 to 0 implies deletion of the arc $(i, j)$.

\begin{definition}
At any stage of the \textbf{Algorithm 1} we say the arc $(a,b)$ is
visited if at some earlier stage of the algorithm when $i = a$ in Line
$1$ and $j =b$ in Line $2$ we had $a_{ij} =1$. 
\end{definition}

\noindent
\begin{remark}\label{rem11}
We first note the following obvious but important facts of the
\textbf{Algorithm 1}:

\begin{enumerate}
\item[\textbf{(1)}] No new arc is created during the algorithm because it never
  changes an entry $a_{ij}$ in the matrix $A$ from $0$ to $1$.  It
  only deletes arcs. 

\item[\textbf{(2)}] Line $1$, $2$ and $3$ of the algorithm implies that the
  algorithm visits the arcs one by one (in a particular 
  order). And while visiting an arc it decides whether or not to
  delete some arcs.

\item[\textbf{(3)}] Since in Line $1$ the $i$ increases from $1$ to $n$ so the
  algorithm first visits the arcs starting from vertex $1$ and
  then the arcs starting from vertex $2$ and then the arcs starting
  from vertex $3$ and so on. 

\item[\textbf{(4)}] Arcs are deleted only in Line $6$ and Line $9$ in the
  algorithm. 

\item[\textbf{(5)}] While the for loop in Line $1$ is in the $i$-th iteration (that
  is when the algorithm is visiting an arc starting at $i$)
  no arc starting from the $i$ is deleted. 
  In Line $6$ only arcs starting from $j$ are deleted and $j \neq i$
  from Line $2$. And in Line $9$ only arcs ending in $i$ are deleted.

\item[\textbf{(6)}] In Line $2$ the condition $j\neq i$ is given just for ease of
  understanding the algorithm. As such even if the condition was not
  there the algorithm would have the same output because if $j = i$ in
  Line $2$ and the algorithm pass line $3$ (that is $a_{ii} = 1$) then 
  Line $6$ would read as ``if $a_{ik}=0$ write $a_{ik}=0$''  and Line
  $9$ would read as ``if $a_{ki}=0$ write $a_{ki}=0$'', both of which
  are no action statement. 

\item[\textbf{(7)}] Similarly, in Line $5$ the condition $k \neq j$ is given just for
  ease of understanding of the algorithm. If the condition was not
  there even then the algorithm would have produced the same result
  because from Line $3$ we already have $a_{ij}=1$ and thus if $k= j$
  then $a_{ik}=a_{ij}\neq 0$.

\item[\textbf{(8)}] Similarly, the condition $k \neq i$ in Line $9$ has no
  particular role in the algorithm.

\end{enumerate}
\end{remark}

One of the most important lemma for the proof of correctness is the following:

\noindent
\begin{lemma}\label{lem11}
An arc once visited in \textbf{Algorithm 1} cannot be deleted later on.
\end{lemma}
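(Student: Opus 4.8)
The plan is to argue by a minimal counterexample on the order in which arcs are deleted. Suppose some visited arc is eventually deleted, and let $(a,b)$ be the \emph{first} visited arc that gets deleted during the run. Two facts from Remark~\ref{rem11} drive everything. First, entries only change from $1$ to $0$, so if an entry is $1$ at some time it was $1$ at every earlier time (``no creation''). Second, while the outer loop sits in iteration $v$, no arc starting at $v$ is ever deleted, i.e.\ the out-arcs of $v$ are \emph{frozen} throughout iteration $v$. Using these I would first pin down the two ways $(a,b)$ can die. A Line~$6$ deletion removes $(a,b)$ while the inner index equals $a$, i.e.\ during the processing of some arc $(i,a)$ with $a_{ia}=1$; since $(a,b)$ was first seen at outer index $a$ and $(a,a)$ is never processed, this forces $i>a$, and the deletion condition is $a_{ia}=1,\ a_{ib}=0$. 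A Line~$9$ deletion removes $(a,b)$ only while the outer index equals $b$, during the processing of some $(b,q)$ with $a_{bq}=1$; the same reasoning forces $b>a$, with deletion condition $a_{bq}=1,\ a_{aq}=0$.

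The central tool is what the processing of $(a,b)$ itself guarantees: immediately afterwards every out-neighbour of $b$ other than $b$ is an out-neighbour of $a$ (from Line~$6$), and every in-neighbour of $a$ other than $a$ is an in-neighbour of $b$ (from Line~$9$). I would combine each containment with ``no creation.'' In the Line~$9$ case, $a_{bq}=1$ now gives $a_{bq}=1$ just after $(a,b)$ was processed, whence $a_{aq}=1$ there; but $a_{aq}=0$ now, so $(a,q)$ was deleted in between. Here the argument closes cleanly: $(a,q)$ is an out-arc of $a$, so by freezing it is untouched throughout iteration $a$; being present right after $(a,b)$ it is therefore present at its own visit time $(a,q)$, so $(a,q)$ is itself \emph{visited}. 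Since $(a,q)$ is deleted strictly before $(a,b)$, this contradicts the minimality of $(a,b)$.

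The Line~$6$ case is the hard part, and I expect it to be the main obstacle. The symmetric step produces a witness arc $(i,b)$ that is present just after $(a,b)$ is processed (from $a_{ia}=1$) but absent at the deletion, so $(i,b)$ died in between; however $(i,b)$ is an out-arc of $i$ and, being frozen during iteration $i$, it must have died at an outer index strictly smaller than $i$, hence \emph{before} its own visit time $(i,b)$. Thus $(i,b)$ was never visited and minimality gives nothing directly. To break this I would analyse how $(i,b)$ was killed. If it was killed by a Line~$6$ step while processing some $(m,i)$ with $a<m<i$ and $a_{mb}=0$, then the out-neighbour containment of the freshly visited arc $(m,i)$ forces $a_{ma}=1$ (from $a_{ia}=1$); as an out-arc of $m$ present during iteration $m$, the arc $(m,a)$ is itself visited, and its out-neighbour containment forces $a_{mb}=1$ at a time inside iteration $m$. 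Since $a_{mb}=0$ later inside iteration $m$, the out-arc $(m,b)$ would have had to be deleted during iteration $m$, which is impossible by freezing. This settles that sub-case.

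The remaining sub-case, in which $(i,b)$ is instead killed by a Line~$9$ step (necessarily while the outer index equals $b$, forcing $a<b<i$), is the most delicate: the natural witness it produces is again an out-arc of $i$, so the ``deleted-before-visited'' phenomenon recurs. I would handle it by pushing the same frozen-out-arc device one level deeper, tracking the containments of the freshly visited arcs produced along the way until some vertex $v$ is forced to own an out-arc that is simultaneously present and deleted within iteration $v$. To guarantee that this descent terminates I would fold the whole Line~$6$ analysis into a short secondary induction (for instance on the outer index at which the witness dies, or by choosing the earliest-deleted arc in the induced chain so that the invoked containments are ``clean''). Establishing that this secondary descent always bottoms out in a freezing violation is the one genuinely technical point; everything else is bookkeeping with the two containment properties and ``no creation.''
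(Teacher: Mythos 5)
Your overall framework is the same as the paper's: take the \emph{first} visited arc $(a,b)$ that gets deleted, use ``no creation'' and the freezing of out-arcs of the current outer vertex (Remark~\ref{rem11}(1) and (5)), classify the deletion as Line~$6$ versus Line~$9$, and close each branch by producing a contradiction. Your Line~$9$ case is correct and is exactly the paper's Case~II, and your first Line~$6$ sub-case (witness $(i,b)$ killed by a Line~$6$ step while processing $(m,i)$ with $a<m<i$) is correct and equivalent to the paper's Case~Ia. But the last sub-case --- $(i,b)$ killed by a Line~$9$ step at outer index $b$, with $a<b<i$, the paper's Case~Ib --- is precisely where the difficulty of the lemma is concentrated, and there you do not give a proof: you propose an open-ended ``descent'' with a secondary induction and explicitly concede that you have not established that it terminates. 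As it stands this is a genuine gap, not a routine bookkeeping step.

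The missing idea is that no unbounded descent is needed: one more round of witness-tracking suffices, provided you exploit the \emph{order of deletions within the inner loop}. Concretely, say $(i,b)$ dies while $(b,q)$ is processed, so $a_{bq}=1$ and $a_{iq}=0$ at that moment. Since $(a,b)$ survives that visit, Line~$9$ forces $a_{aq}=1$ there, so $(a,q)$ is visited (frozen during iteration $a$), and by minimality it is still present when $(i,a)$ is processed in iteration $i$. Moreover $a_{iq}=1$ at the visit of $(a,q)$ (else Line~$9$ would have deleted $(i,a)$ then), so $(i,q)$ dies strictly between iterations $a$ and $b$; now split on how. If $(i,q)$ dies by Line~$9$ at outer index $q$, then $a<q<b$, and during the processing of $(i,a)$ the inner loop runs $k=1,\dots,n$, so it deletes the \emph{visited} arc $(a,q)$ (at $k=q$) strictly before it deletes $(a,b)$ (at $k=b$) --- contradicting the minimal choice of $(a,b)$. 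Note this branch is closed by the extremal choice together with the inner-loop ordering, \emph{not} by a freezing violation, so your conjecture that the descent always ``bottoms out in a freezing violation'' is not provable as stated. If instead $(i,q)$ dies by Line~$6$ while $(t,i)$ is processed, with $a<t<b$ and $a_{tq}=0$, then survival of $(i,b)$ past iteration $t$ forces $a_{tb}=1$, so $(t,b)$ is visited during iteration $t$; since $a_{tq}=0$ throughout iteration $t$ (frozen out-arcs plus no creation), processing $(t,b)$ deletes $(b,q)$ at iteration $t<b$, contradicting $a_{bq}=1$ at its visit in iteration $b$. With this two-branch analysis in place of your sketched induction, your argument matches the paper's proof; without it, the lemma is not established.
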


\begin{proof} Let us prove by contradiction. 

Suppose at a certain point in the algorithm's run the arc $(i,j)$ has
already been visited, and then when the algorithm is visiting some
other arc starting from vertex $r$ the algorithm decides to delete
the arc $(i,j)$.  

If such an arc $(i,j)$ which is deleted after being visited exists
then there must a first one also. Without loss of generality we can
assume that the arc $(i,j)$ is the first such arc: that 
is when the algorithm decides to delete the arc $(i,j)$ no other arc that has
been visited by the algorithm has been deleted.   

By point number $3$ in Remark~\ref{rem11},
$r\geq i$.  From point number $5$ in Remark~\ref{rem11} we can say
that $r \neq i$. So we have $r > i$. 

We now consider two cases depending on whether the algorithm decides
to delete the arc $(i,j)$ is Line $6$ or Line $9$.

\begin{figure}[h]
\begin{center}
\includegraphics[scale=0.4]{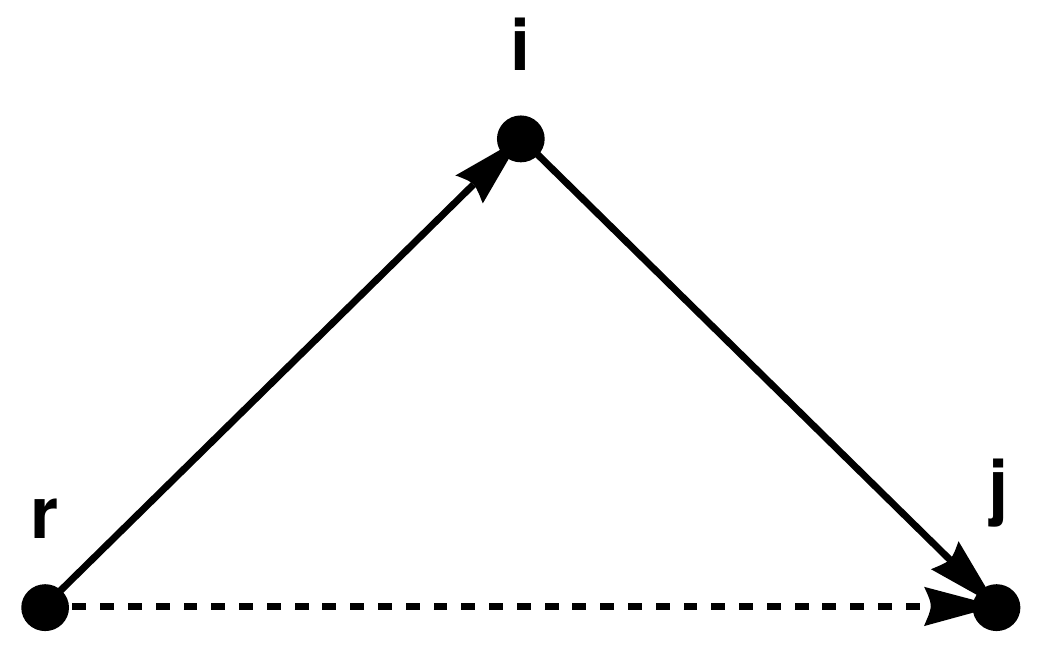}\hspace{1in} \includegraphics[scale=0.4]{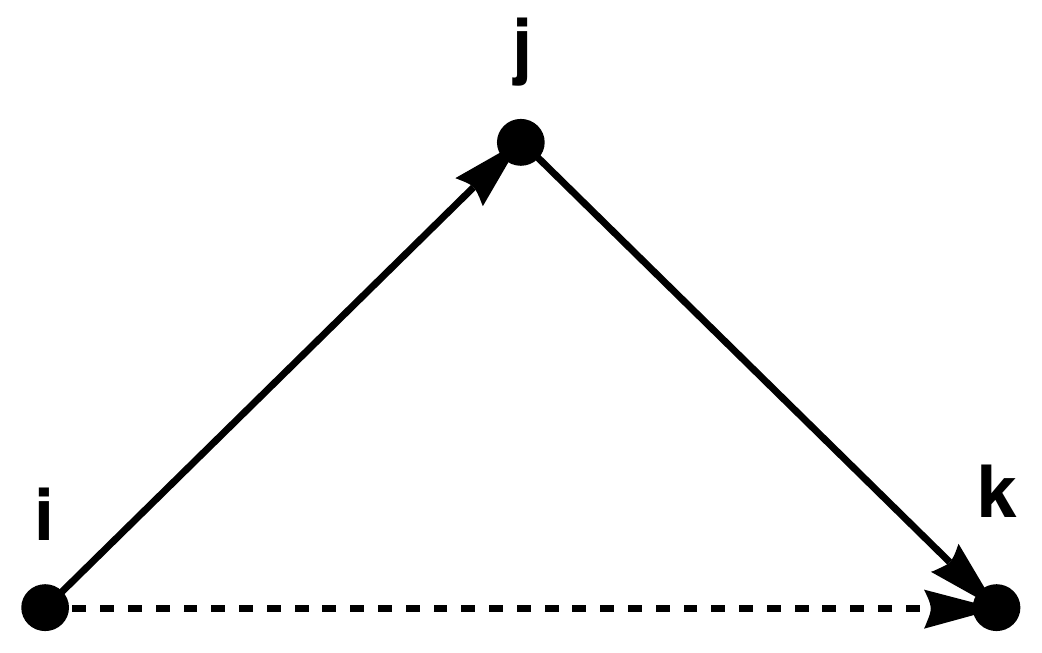}

\textbf{Case I\hspace{2.5in} Case II}
\end{center}
\caption{Diagrams of the two cases for Lemma~\ref{lem11}} \label{f:case2}
\end{figure}

\vspace{1em}
{\bf Case I.}\ \ Suppose $(i,j)$ is deleted in Line $6$, when the
algorithm was visiting an arc starting from vertex $r$. 
Since the algorithm is deleting $(i,j)$ in Line $6$ so from Line $3$
and Line $5$ we have, at that stage, $a_{ri} = 1$ and $a_{rj}=0$ (just
like in Figure~\ref{f:case2}(left)).

Since no arc is ever created by the algorithm (point $1$ in Remark \ref{rem11}),
$a_{ri}$ was $1$ when the arc $(i,j)$ was visited. 
So at the stage when the algorithm was visiting arc $(i,j)$, $a_{rj}$ must be $1$, 
otherwise $(r,i)$ would be deleted by Line $9$. Thus $(r,j)$ was
deleted after  visiting the arc $(i,j)$  and but before time $(i,j)$ is being deleted. 

By Remark~\ref{rem11}(5), $(r,j)$ cannot be deleted when
visiting an arc starting from $r$. So $(r,j)$ must have been deleted
when visiting an arc starting from vertex $r_1$ and $r_1<r$.  

We now split this case into two cases depending on whether
$r_1=j$ or $r_1 \neq j$.

\begin{figure}[h]
\begin{center}
\includegraphics[scale=0.4]{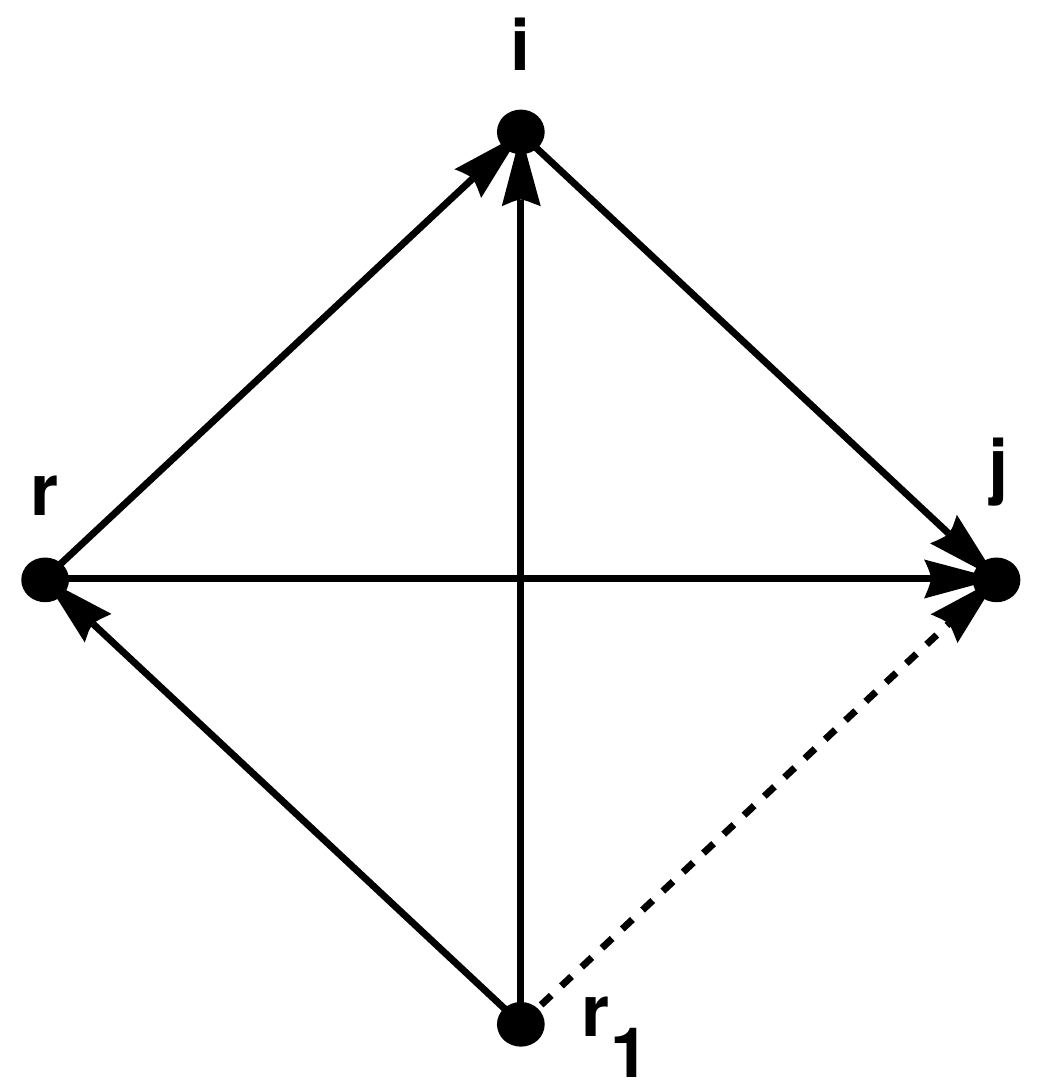}\qquad \includegraphics[scale=0.4]{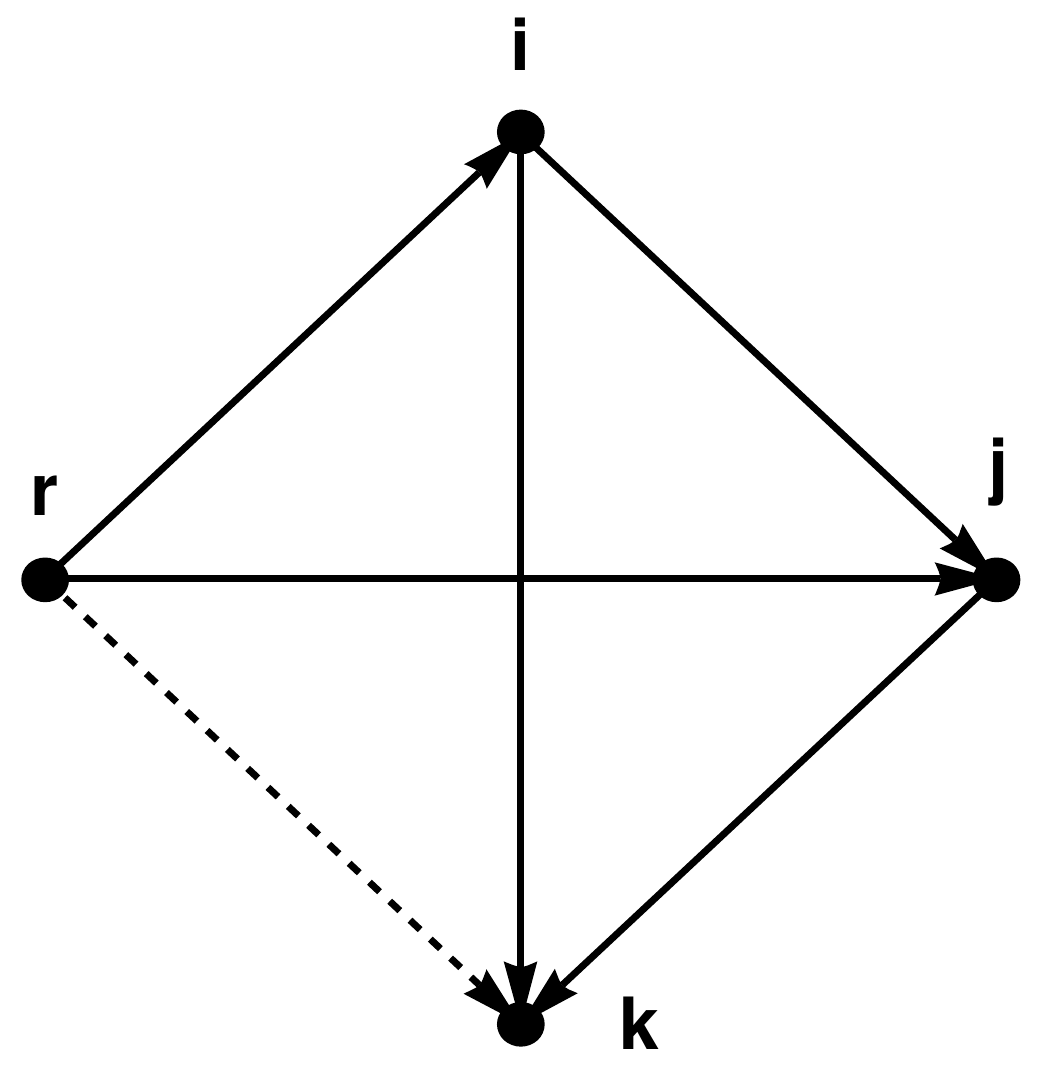}

\textbf{Case Ia\hspace{2.2in} Case Ib}
\end{center}
\caption{Diagram for subcases of Case 1 for Lemma~\ref{lem11}}\label{f:case1a}
\end{figure}

\

\textbf{Case Ia: ($r_1 \neq j$)}  

By Remark~\ref{rem11}(5) we know
at the set of arcs starting from vertex 
$r_1$ must have remained unchanged during the $r_1$-th iteration of
Line $1$.  

But since in the $r_1$-th iteration of Line $1$ the arc $(r,j)$ was deleted so 
$(r_1,r)$ must have been present while $(r_1,j)$ was absent. Also 
if $a_{r_1i} =0$ when visiting the arc $(r_1, r)$, the algorithm would have found
$a_{r_1r} = 1$ and $a_{r_1i} =0$ and in that case would have deleted
$(r,i)$ is Line $6$. That would contradict that fact the the arc
$(r,i)$ was present when the arc $(i,j)$ was being deleted. Thus 
at the start of the $r_1$-th iteration of Line $1$ the situation would
have been like in Figure~\ref{f:case1a}(left)).

But in that case, when visiting $(r_1,i)$ the algorithm would have
found $a_{r_1i} =1$ and  $a_{r_1j}=0$ and then would
have deleted the arc $(i,j)$. But by assumption the arc $(i,j)$ is deleted
when visiting arc $(r,i)$ and not an arc starting at $r_1$. So we
get a contradiction. And thus if $s \neq j$ we have a
contradiction. 

\vspace{1em} 
\textbf{Case Ib: ($r_1 = j$)}  

Let the arc $(r,j)$ be deleted when
the algorithm was visiting the arc $(r_1, k)$ (that is $(j,k)$) for
some $k$. Since the arc $(j,k)$ is deleted after the arc $(i,j)$ is
visited and before the arc $(r,i)$ is visited, so $i < j< r$. 

Now consider the stage when the arc $(j,k)$ is visited by the
algorithm. If arc $(i,j)$ is not present at that time then the arc
$(i,j)$ would have been deleted which would contradict the assumption
that the arc $(i,j)$ is deleted when the algorithm was visiting
$(r,i)$.  So just before the stage when the algorithm was visiting
arc $(j,k)$ the situation would have been like in Figure~\ref{f:case1a}(right)).

So the arc $(i,k)$ was present when the algorithm was visiting the
arc $(j,k)$. But since $i<j$ so the arc $(i,k)$ must have been
visited already. By the minimality condition that $(i,j)$ is the
first arc that is visited and then deleted and since the arc $(i,j)$
is deleted when visiting arc $(r, i)$, so when the algorithm just started
visiting the arc $(r,i)$ the arc $(i, k)$ must be present.  Also at
that stage the arc $(r,k)$ was absent as it was absent when visiting
the arc $(j, k)$ and $j<r$. So when the algorithm just started to
visit $(r,i)$ the situation would have been like in
Figure~\ref{f:case1a}(right)) except the arc $(r,j)$ would also have
been missing.

When the algorithm was visiting the arc $(j,k)$ the arc $(r,k)$
was not there. But when the algorithm visited the arc $(i,j)$ the
arc $(r,k)$ must have been there, else the arc $(r,i)$ would have
been deleted at that stage, which would contradict our assumption that
$(i,j)$ was deleted when visiting $(r,i)$. So the arc $(r,k)$ must
have been deleted after the arc $(i,k)$ was visited but before the
arc $(j,k)$ was visited.

If the arc is deleted when visiting some arc starting with $k$ then
it means that $i<k<j$. Now consider the stage when the algorithm was
visiting $(r,i)$. As described earlier the situation would have been like in
Figure~\ref{f:case1a}(right)) except the arc $(r,j)$ would also have
been missing. Since $k< j$ so the algorithm would have deleted $(i,k)$ before it
deleted $(i,j)$. And since the algorithm has also visited $(i,k)$
earlier so this contradicts the the minimality condition of $(i,j)$
being the first visited arc to be deleted.

The other case being the arc deleted when visiting the some arc
ending in $r$, say $(t,r)$, where $i<t<j$. Thus during the $t$-th iteration of Line
$1$ the arcs $(t,r)$ is present while the arc $(t, k)$ is
absent. Now, since in the $t$-th iteration the arc $(r,j)$ is not
deleted thus it means that the arc $(t, j)$ was present during the
$t$-th iteration of Line $1$. But in that case since arcs $(t, j)$ and $(j, k)$
are present while $(t,k)$ is not present the algorithm would have
deleted the arc $(j,k)$ in the $t$-th iteration of Line $1$, this contradicts the
assumption that the arc $(r,j)$ is deleted in the $j$-th iteration of 
Line $1$ when visiting the arc $(j,k)$. 

\

Thus the arc $(i,j)$ cannot be deleted by the algorithm in Line $6$
when visiting an arc starting from $r$.

\vspace{1em}
{\bf Case II.}\ \ Suppose $(i,j)$ is deleted in Line $9$, when the
algorithm was visiting an arc starting from vertex $r$. In this case
$j = r$. And since $r > i$ so $j>i$. Say the arc $(i,j)$ is deleted
when visiting arc $(j,k)$, for some vertex $k$.
Since the algorithm is deleting $(i,j)$ in Line $9$ so from Line $3$
and Line $8$ we have, at that stage, $a_{jk} = 1$ and $a_{ik}=0$   (cf. Figure
\ref{f:case2}(left)). 

Now if $a_{ik}$ was $0$ when the algorithm visited the arc $(i,j)$
then  the algorithm would have found $a_{ik} = 0$ and $a_{i,j} = 1$
and in that case would have deleted the arc $(j,k)$ in Line $6$. That
would give a contradiction as in a later stage of the algorithm (in
particular in the $j$-th iteration of Line $1$, with $j> i$) the arc $(j,k)$ is
present.  So when the arc $(i,j)$ was visited the arc $(i,k)$ was
present. 

Since by Remark~\ref{rem11}(5) the arc $(i,k)$ cannot be
deleted in the $i$th iteration of Line $1$, so the arc $(i,j)$ must have been
visited in the $i$-th iteration of Line $1$ and must have been deleted by the
algorithm at a later time but before the arc $(i,j)$ is deleted. 
But this would contradict the minimality of the arc $(i,j)$.

Hence even in this case also we get a contradiction. So this completes the
proof. 
\end{proof}

Next we prove that the output is transitive.

\noindent
\begin{lemma}\label{lem12}
The matrix $T$ output by the \textbf{Algorithm 1} is transitive.
\end{lemma}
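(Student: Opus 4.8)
The plan is to argue by contradiction: suppose $T$ is not transitive, so there are vertices $a,b,c$ with $(a,b),(b,c)\in T$ but $(a,c)\notin T$. Before the main case analysis I would record two observations. First, since the algorithm only ever deletes arcs and never creates one (Remark \ref{rem11}(1)), every arc present in the final output $T$ is in fact present throughout the whole run; consequently, if $(a,b)\in T$ with $a\neq b$ then at the moment the loops reach $i=a,\,j=b$ we have $a_{ab}=1$, so $(a,b)$ is \emph{visited} during iteration $i=a$. Second, a loop $(v,v)$ is never deleted: deletions happen only in Line $6$ (which removes $(j,k)$, guarded by $k\neq j$) and Line $9$ (which removes $(k,i)$, guarded by $k\neq i$), and neither guard can be satisfied by an arc with equal endpoints.

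Next I would compare $a$ and $b$. The cases $a=b$ and $b=c$ are trivial, since then $(a,c)$ coincides with one of the two given arcs. Otherwise $(a,b)$ is visited in iteration $a$ and $(b,c)$ in iteration $b$, so exactly one of them is visited later, and I split on which. If $a<b$, then $(a,b)$ is visited first; consider the visit of $(b,c)$ in iteration $i=b,\,j=c$. Line $9$ taken with $k=a$ would delete $(a,b)$ whenever $a_{ac}=0$, but the already-visited arc $(a,b)$ cannot be deleted by Lemma \ref{lem11}; hence $a_{ac}=1$ at that moment, i.e.\ $(a,c)$ is present when $(b,c)$ is visited. If instead $a>b$, the symmetric argument uses Line $6$ with $k=c$ during the visit of $(a,b)$ (iteration $i=a,\,j=b$): it would delete the earlier-visited arc $(b,c)$ unless $a_{ac}=1$, again forcing $(a,c)$ to be present.

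It then remains to upgrade ``$(a,c)$ present at that moment'' to ``$(a,c)\in T$''. If $a\neq c$, the moment in question lies in iteration $\max(a,b)\geq a$, so by the no-creation observation $(a,c)$ is already present at the start of iteration $a$; since no arc leaving $a$ is deleted during that iteration (Remark \ref{rem11}(5)), it is still present when the step $j=c$ is reached and is therefore visited, hence survives to the end by Lemma \ref{lem11}. If $a=c$, then $(a,c)$ is the loop $(a,a)$, which by the second observation is never deleted and so also belongs to $T$. In every case $(a,c)\in T$, contradicting the choice of the triple, so $T$ is transitive.

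I expect the crux to be the bookkeeping about \emph{when} arcs are visited: the argument hinges on correctly identifying which of $(a,b)$ and $(b,c)$ is processed last (via the comparison of $a$ and $b$) and on verifying that the matching rule, Line $6$ or Line $9$, is precisely the one that would delete the earlier-visited arc in the absence of $(a,c)$. The loop case $a=c$ is the single point where Lemma \ref{lem11} does not apply, so it must be closed separately by the observation that loops are never deleted; I would flag this as the step most easily missed.
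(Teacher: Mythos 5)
Your proof is correct and takes essentially the same route as the paper's: both arguments hinge on Lemma \ref{lem11}, deducing that if $(a,c)$ were absent when the later of $(a,b)$ and $(b,c)$ is visited, then Line 9 (resp.\ Line 6) would delete the already-visited arc, and then concluding that $(a,c)$, being present, is itself visited during iteration $a$ (via Remark \ref{rem11}(1) and (5)) and hence survives to the output. Your explicit handling of the degenerate cases ($a=b$, $b=c$, and the loop case $a=c$, where Lemma \ref{lem11} does not apply and one must note that loops are never deleted) is a point of care that the paper's own proof glosses over.
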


\begin{proof}
Suppose $t_{ij}=1=t_{jk}$. By Remark \ref{rem11}(1) no arc
is created. So at all stages and in particular, at the initial stage
$a_{ij}=a_{jk}=1$. Suppose $a_{ik}=0$ at the initial stage. Then when
the algorithm visited $(i,j)$ or $(j,i)$ (whichever comes first), the
arc $(j,k)$ or $(i,j)$ (respectively) will be deleted for the lack of
the arc $(i,k)$, as $a_{ij}=a_{jk}=1$ throughout (cf. Figure
\ref{f:ttrans}). 

Thus suppose the arc $(i,k)$ is deleted at some stage, say, $r$-th
iteration of Line $1$. Now $r>i, j$ for otherwise the arc $(i,k)$
would be deleted before the $i$-th or $j$-th iteration of Line $1$. And
in that case in the $i$-th or $j$-th iteration of Line $1$ (depending on which
of $i$ and $j$ is smaller) of Line $1$  either $(j,k)$ or $(i,j)$
would be deleted.  And then at the end at least one of $t_{ij}$ and
$t_{ik}$ must be $0$.

But then the arc $(i,k)$ is deleted during the $i$-th iteration of Line $1$ (as $i<r$). Since no arc is deleted once it is visited by Lemma \ref{lem11}, we have $t_{ik}=1$. Therefore $T$ is transitive. 
\end{proof}

\begin{figure}[h]
\begin{center}
\includegraphics[scale=0.4]{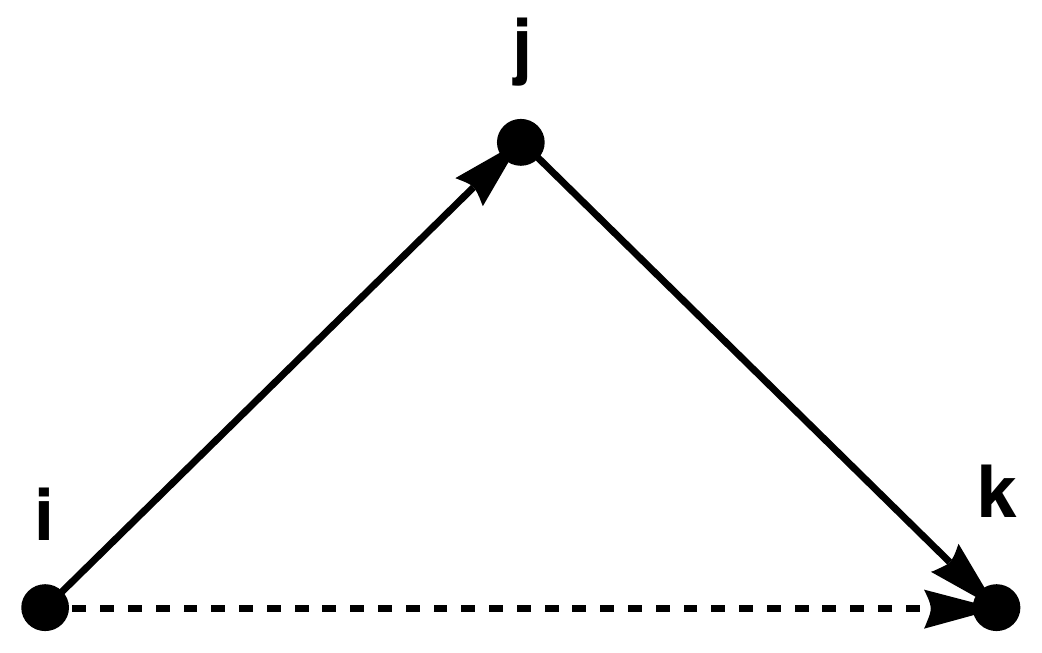}

\caption{Diagram for Lemma~\ref{lem12}}\label{f:ttrans}
\end{center}
\end{figure}

Using Lemma~\ref{lem12} and Lemma~\ref{lem11} we can finally prove the
correctness of the algorithm. 

\noindent
\begin{lemma}\label{lem13}
The matrix $T$ output by the \textbf{Algorithm 1} is a maximal transitive relation contained in $A$.  
\end{lemma}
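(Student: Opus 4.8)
The plan is to dispose of the easy parts first and then concentrate on maximality. Containment $T \subseteq A$ is immediate from Remark~\ref{rem11}(1), since the algorithm only ever deletes arcs and never creates them, and transitivity of $T$ is exactly Lemma~\ref{lem12}. So the only thing left to establish is that $T$ is \emph{maximal}, i.e.\ that there is no transitive relation $\beta$ with $T \subsetneq \beta \subseteq A$.

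I would argue by contradiction, combined with a well-ordering of the deletions by time. Suppose such a $\beta$ exists; then $\beta \setminus T \neq \emptyset$, and every arc of $\beta \setminus T$ was deleted by the algorithm at some point (it lies in $A$ but not in the output), so among these finitely many arcs I can pick the one, say $(p,q)$, that was deleted \emph{earliest}. Its deletion occurred either in Line~$6$ or in Line~$9$ while the algorithm was visiting some arc $(i,j)$ with $a_{ij}=1$ at that moment. By Lemma~\ref{lem11} a visited arc is never deleted afterwards, so the visited arc $(i,j)$ survives into $T$, hence $(i,j) \in T \subseteq \beta$; this is my ``witness'' arc. The deletion condition also exhibits a ``missing'' arc whose matrix entry equalled $0$ at the instant of deletion.

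Concretely, in the Line~$6$ case $(p,q)=(j,k)$, the witness is $(i,j)=(i,p)$ and the missing arc is $(i,k)=(i,q)$; in the Line~$9$ case $(p,q)=(k,i)$, the witness is $(i,j)=(q,j)$ and the missing arc is $(k,j)=(p,j)$. In either case the witness together with $(p,q)$ forces, by transitivity of $\beta$, the presence of the missing arc in $\beta$ (for instance $(i,p),(p,q)\in\beta$ forces $(i,q)\in\beta$). I would then split on whether the missing arc actually lies in $\beta$. If it does not, then $\beta$ violates transitivity on this triple, a contradiction. If it does, then since the missing arc has entry $0$ at the deletion time of $(p,q)$ yet belongs to $A$, it must itself have been deleted strictly \emph{earlier} than $(p,q)$; moreover it is absent from $T$ (an arc that is $0$ at some stage stays $0$, as no arc is ever created), so it is an arc of $\beta \setminus T$ deleted before $(p,q)$, contradicting the minimality of $(p,q)$. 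Either branch yields a contradiction, so $\beta = T$ and $T$ is maximal.

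The main obstacle is precisely that second sub-case: a naive ``$T \cup \{e\}$ is not transitive for every deleted $e$'' argument does not by itself rule out a larger transitive $\beta$, because the arc whose absence witnesses a violation might itself be a deleted arc that $\beta$ happens to contain. The device that resolves this is choosing the earliest-deleted arc of $\beta \setminus T$ and using the algorithmic fact (via Lemma~\ref{lem11} and the no-creation property of Remark~\ref{rem11}(1)) that the arc demanded by the witnessing triple was already gone at that time, hence deleted even earlier. I would also check the degenerate index coincidences, e.g.\ $i=q$ producing a self-loop as the missing arc; here one uses that Lines~$6$ and~$9$ never touch a self-loop, so such a missing arc is simply absent from $A$ and the first sub-case applies directly.
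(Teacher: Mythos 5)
Your proof is correct and takes essentially the same route as the paper's: both arguments select the \emph{earliest-deleted} arc among the extra arcs, use Lemma~\ref{lem11} to guarantee the visited witness arc survives into $T$, and then derive a contradiction either because the missing arc is absent from $A$ (violating transitivity/containment) or because it was deleted strictly earlier (violating minimality), with the same Line~6/Line~9 case split. The only cosmetic difference is that you argue directly with an arbitrary transitive $\beta$ satisfying $T \subsetneq \beta \subseteq A$, whereas the paper phrases the same argument via the transitive closure of $T\cup\set{(a,b)}$; the two formulations are interchangeable.
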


\begin{proof}
$T$ is transitive by Lemma \ref{lem12}. Also by Remark~\ref{rem11}(1) the output matrix is contained in $A$. So the only
thing remaining to prove is that the output matrix $T$ is maximal. 

Now if $T$ is not a maximal transitive sub-relation then there must be
some arc (say $(a,b)$) such that the transitive closure of
$T\cup \{(a,b)\}$ is also contained in $A$. 

Now by Lemma \ref{lem11}, an arc once visited can never be
deleted. Also the algorithm is visiting every undeleted arc. 
Thus $T$ is the collection of visited arcs and these arcs
are present at every stage of the algorithm. 

Thus, every arc in the transitive closure of
$T\cup \{(a,b)\}$ that is not in $T$ must have been deleted in some
iteration of Line $1$. Let $(i,j)$ be the first arc to be deleted
among all the arcs that are in the of transitive closure of
$T\cup \{(a,b)\}$ but not in $T$. 

Clearly the transitive closure of $T\cup \{(i,j)\}$ is also contained
in $A$, and all the arcs in the transitive closure of $T\cup
\{(i,j)\}$  either is never deleted or is deleted after the arc
$(i,j)$ is deleted. Suppose the arc $(i,j)$ is deleted in the $r$-th 
iteration of Line $1$. We have $r\neq i$ by Remark~\ref{rem11}(5) and by
Lemma~\ref{lem11} we have $r<i$.  

We now consider two cases depending on whether $r$ is $j$ or not.

\begin{figure}[h]
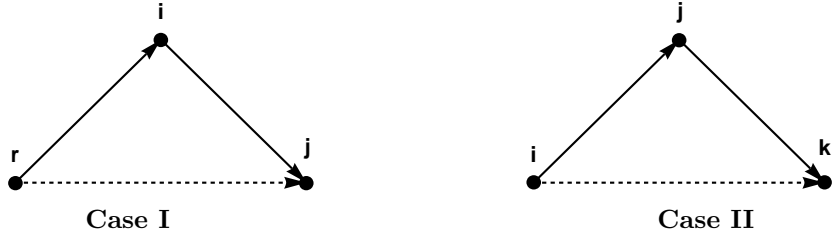

\begin{center}
\includegraphics[scale=0.4]{case1.pdf}\hspace{1in} \includegraphics[scale=0.4]{case2.pdf}

\textbf{Case I\hspace{2.5in} Case II}
\end{center}
\caption{Diagrams of the two cases for Lemma~\ref{lem13}}\label{f:max}
\end{figure}

\vspace{1em}
{\bf Case I:\ \ $r\neq j$ }

In this case, since the arc $(i,j)$ was deleted in the $r$-iteration
of Line $1$, the arc $(i,j)$ must have been deleted when the
algorithm was visiting the arc $(r,i)$. So at the stage when the arc
$(i,j)$ was deleted, the arc $(r,j)$ must not have been there (else
the algorithm wouldn't have deleted the arc $(i,j)$). 

If $a_{rj}=0$ in $A$, then $t_{rj}=0$ (by Remark
\ref{rem11}(1)). But by Lemma~\ref{lem11} $t_{ri}=1$ as the arc $(r,i)$
is being visited. So $T\cup\set{(i,j)}$ is not transitive (cf. Figure
\ref{f:max}(left)), and the transitive closure of $T\cup\set{(i,j)}$
must contain the arc $(r,j)$. Thus $a_{rj}=1$ in $A$, but the arc
$(r,j)$ is deleted in some stage of the algorithm but before the visit of the $r$-th
iteration of Line $1$, say, at $r_1$-th iteration of Line $1$, with $r_1<r$. 

Thus the arc $(r,j)$ is in the transitive closure of
$T\cup\set{(i,j)}$ and it got deleted before the deletion of arc
$(i,j)$. This is a contradiction to the fact that the arc $(i,j)$ was
the first arc to be deleted. So when $r\neq j$ we have a
contradiction.

\vspace{1em}
{\bf Case II:\ \ $r= j$ }

In this case, since the arc $(i,j)$ was deleted in the $j$-iteration
of Line $1$, the arc $(i,j)$ must have been deleted when the
algorithm was visiting some arc $(j,k)$, for some vertex $k$. 
So at the stage when the arc $(i,j)$ was deleted, the arc $(i,k)$
must not have been there (else 
the algorithm wouldn't have deleted the arc $(i,j)$). 

If $a_{ik}=0$ in $A$, then $t_{ik}=0$ (by Remark
\ref{rem11}(1)). But by Lemma~\ref{lem11} $t_{jk}=1$ as the arc $(j,k)$
is being visited. So $T\cup\set{(i,j)}$ is not transitive (cf. Figure
\ref{f:max}(right)), and the transitive closure of $T\cup\set{(i,j)}$
must contain the arc $(i,k)$. Thus $a_{ik}=1$ in $A$, but the arc
$(i,k)$ is deleted in some stage of the algorithm but before the visit of the $j$-th
iteration of Line $1$, say, at $r_1$-th iteration of Line $1$, with $r_1<j$. 

Thus the arc $(i,k)$ is in the transitive closure of
$T\cup\set{(i,j)}$ and it got deleted before the deletion of arc
$(i,j)$. This is a contradiction to the fact that the arc $(i,j)$ was
the first arc to be deleted. So when $r= j$ we have a
contradiction.

Since in both the case we face a contradiction so we have that the
output $T$ is a maximal transitive relation contained in $A$. 
\end{proof}

\subsection{Better running time analysis of Algorithm 1}

If we do a better analysis of the running time of the
\textbf{Algorithm 1} we can see that the algorithm has running time
$O(n^2 + nm)$. To see it more formally consider a new pseudocode of
the algorithm that we present as Algorithm~\ref{algo2}. It is not hard to see that both the algorithms are
basically same. 


\noindent
\IncMargin{0em}
\begin{algorithm}\label{algo2}
\SetKwData{Left}{left}\SetKwData{This}{this}\SetKwData{Up}{up}
\SetKwFunction{Union}{Union}\SetKwFunction{FindCompress}{FindCompress}
\SetKwInOut{Input}{Input}\SetKwInOut{Output}{Output}

\Input{An $n\times n$ matrix $A=(a_{ij})$ representing a binary relation.}
\Output{A matrix $T=(t_{ij})$ which is a maximal transitive relation contained in the given binary relation $A$.}

\BlankLine
\For{$i\leftarrow 1$ \KwTo $n$}{
Initialize $B_i = \emptyset$\\
\For{each $s\leftarrow 1$ \KwTo $n$, $j \neq i$}{\label{forins}
\If{$a_{ij}=1$}{ Include $j$ in $B_i$}} 
\For{ each $j \in B_i$}{
\For{ each $k=1$ to $n$}{
\If{$k\neq j$ and $a_{ik}=0$}{Make $a_{jk}=0$}
\If{$k\neq i$ and $a_{kj}=0$}{Make $a_{ki}=0$}
}
}
}

\caption{Finding a maximal transitive sub-relation}\label{algo1}
\end{algorithm}\DecMargin{1em}


\begin{theorem}
Algorithm \ref{algo2} correctly finds a maximal transitive relation contained in a given binary relation in $O(n^2+mn)$, where $m$ is the number of $1$'s in $A$.
\end{theorem}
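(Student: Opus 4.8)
The plan is to establish correctness by reducing to the already-proved properties of \textbf{Algorithm 1}, and then to carry out a sharper running-time count for the reorganized pseudocode. First I would argue that Algorithm~\ref{algo2} produces exactly the same output matrix $T$ as \textbf{Algorithm 1} on every input $A$; since Theorem~\ref{th1} (via Lemmas~\ref{lem11}, \ref{lem12} and \ref{lem13}) already guarantees that the output of \textbf{Algorithm 1} is a maximal transitive relation contained in $A$, the correctness claim follows immediately. It then only remains to verify the improved time bound.

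For the equivalence, the key is that during the $i$-th iteration of the outer loop no arc starting at $i$ is ever deleted. This is precisely Remark~\ref{rem11}(5): in Line~6 only arcs starting at $j$ (with $j\neq i$) are removed and in Line~9 only arcs ending at $i$ are removed, so the entries $a_{ij'}$ of row $i$ are invariant throughout iteration $i$. Consequently the set $B_i$ computed at the top of iteration $i$ is exactly the set of indices $j$ for which the test $a_{ij}=1$ in Line~3 of \textbf{Algorithm 1} would succeed during that iteration, and precomputing $B_i$ does not change which arcs get processed. I would further observe that the order in which the elements of $B_i$ are processed is irrelevant: processing an index $j_1$ modifies only row $j_1$ (Line~6) and column $i$ (Line~9), whereas processing a later index $j_2\in B_i$ reads only row $i$ (Line~6, unaffected) and column $j_2$ (Line~9), and the only entry of column $j_2$ that processing $j_1$ could have touched is $a_{j_1 j_2}$, which is not deleted because $a_{ij_2}=1$ for $j_2\in B_i$. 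Hence the two algorithms perform the same deletions and yield the same $T$.

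For the running time I would separate the two pieces of work inside iteration $i$. Building $B_i$ scans row $i$ in $O(n)$ time, contributing $O(n^2)$ over all $i$. The nested loops over $j\in B_i$ and $k\in\{1,\dots,n\}$ cost $O(n)$ per element of $B_i$, so iteration $i$ contributes $O(n\,|B_i|)$ there. The crucial counting step is that $\sum_{i=1}^{n}|B_i|\le m$: each $B_i$ consists of distinct arcs all emanating from $i$, these sets are pairwise disjoint across different $i$, and no arc is ever created (Remark~\ref{rem11}(1)), so the total number of pairs ever placed in some $B_i$ is at most the initial number $m$ of ones in $A$. Summing gives $O(n^2)+O(n)\sum_i|B_i| = O(n^2 + mn)$, as claimed.

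The main obstacle I anticipate is the equivalence argument rather than the time count: one must show not merely that the same arcs are inspected, but that processing them in the (possibly different) order induced by the iteration over $B_i$ produces identical deletions. The invariance of row $i$ during iteration $i$ is what makes both the precomputation of $B_i$ and the order-independence work, so I would state that invariance explicitly as the hinge of the proof and reduce everything else to it.
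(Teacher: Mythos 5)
Your proposal is correct and follows essentially the same route as the paper: the paper likewise discharges correctness by appealing to Theorem~\ref{th1} (asserting that the two algorithms are ``basically same'') and obtains the running time by computing $\sum_{i=1}^{n}(n + k_i n) = n^2 + mn$, where $k_i$ is the number of $1$'s in the $i$-th row, exactly matching your $O(n^2) + O(n)\sum_i |B_i|$ count. Your only addition is to make explicit what the paper leaves as an unproved assertion---that the invariance of row $i$ during iteration $i$ (Remark~\ref{rem11}(5)) justifies precomputing $B_i$ and makes the processing order within $B_i$ immaterial---which is a sound fleshing-out of the same argument rather than a different one.
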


\begin{proof}
The proof for correctness is same as in Theorem \ref{th1}. We
calculate only the time complexity of the algorithm and it is given by

\begin{eqnarray*}
&& \sum\limits_{i=1}^n (n+k_in), \mbox{ (where $k_i$ is the number of $1$'s in the $i^{\mbox{th}}$ row)}\\
& = & n^2+n\sum\limits_{i=1}^n k_i   =  n^2+mn. 
\end{eqnarray*}
\end{proof}


\section{Maximum Transitive Relation}\label{section:maximum-transitive}

In this section, we study the problem of obtaining a maximum transitive relation contained in a binary relation. We will be using the notation of directed graphs for binary relations. As before, let's assume that input directed graph has $m$ edges. Denote by $UG(D)$ the underlying graph of digraph $D$.

First, we state a well known result from graph theory.

\begin{lemma}\label{lem:maximum-trans1}
There exists a bipartite subgraph of size $m/2$ in any graph with $m$ edges. 
\end{lemma}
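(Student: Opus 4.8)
The plan is to recast the statement in terms of vertex cuts. A \emph{bipartite subgraph} of $G=(V,E)$ arises from a partition $V=A\cup B$ with $A\cap B=\emptyset$ by retaining exactly the edges having one endpoint in $A$ and the other in $B$; these are the edges \emph{crossing} the cut. Thus it suffices to produce a partition $(A,B)$ whose set of crossing edges has size at least $m/2$, since that set is automatically a bipartite subgraph. I would prove existence by the probabilistic method and, because the paper ultimately wants algorithms, also record the constructive local-search argument giving the same bound.

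For the probabilistic argument, first I would place each vertex independently into $A$ or $B$, each with probability $1/2$. For a fixed edge $e=(u,v)$, its endpoints fall on opposite sides precisely when one is assigned to $A$ and the other to $B$, an event of probability $1/2$. Writing $X_e$ for the indicator that $e$ is crossing and $X=\sum_{e\in E}X_e$ for the number of crossing edges, linearity of expectation gives $\mathbb{E}[X]=\sum_{e\in E}\Pr[X_e=1]=m/2$. Since a random variable must attain a value at least its mean in some outcome, there is a partition with at least $m/2$ crossing edges, and the crossing edges of that partition form the required bipartite subgraph.

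For the constructive alternative, I would begin with an arbitrary partition and repeatedly relocate any vertex whose number of same-side neighbours exceeds its number of cross-side neighbours to the opposite part. Each such move strictly increases the number of crossing edges, and since this quantity is a nonnegative integer bounded above by $m$, the process must terminate. At termination every vertex $v$ has at least $\deg(v)/2$ of its edges crossing the cut. Summing this inequality over all vertices and using that the left-hand side counts each crossing edge twice while the total degree counts each edge twice, I obtain $2\,c\ge 2(m-c)$, where $c$ denotes the number of crossing edges, and hence $c\ge m/2$.

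I expect no genuine obstacle here, as this is a textbook fact; the only points needing care are the two standard ``rounding'' steps. In the probabilistic route, one must invoke that some outcome meets or exceeds the expectation rather than merely citing the average. In the constructive route, the subtle point is termination, which is secured by the observation that the cut size strictly increases at every step yet is a bounded integer. Both are routine, so I would keep the write-up brief and favour whichever of the two arguments reads most cleanly in context.
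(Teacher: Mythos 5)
Your proof is correct. Note that the paper itself offers no proof of this lemma: it is stated as ``a well known result from graph theory,'' with a one-line remark that a bipartite subgraph of this size can be found deterministically in polynomial time by a folklore method. Both of your arguments are sound and standard. The probabilistic one (each edge crosses a uniform random bipartition with probability $1/2$, so the expected number of crossing edges is $m/2$, and some partition attains the expectation) establishes existence cleanly; the local-search one (move any vertex with more same-side than cross-side neighbours, terminating because the cut size is a strictly increasing integer bounded by $m$, then sum the per-vertex inequality to get $2c \ge 2(m-c)$) is the argument the paper implicitly relies on, since Theorem \ref{theorem:one-fourth-approx} needs the bipartite subgraph to be computable in polynomial time, and your termination bound of at most $m$ improvement steps delivers exactly that. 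So your write-up not only matches the intended folklore content but also supplies the constructive detail the paper leaves unstated; if only one argument is kept, the local-search version is the one that supports the downstream algorithmic claim.
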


Obtaining such a bipartite graph deterministically in poly-time is a folklore result. This gives the following.

\begin{theorem}\label{theorem:one-fourth-approx}
There exists a poly-time algorithm to obtain an $m/4$ sized transitive subgraph in any directed graph $D$ with $m$ edges. This gives a $1/4$-approximation algorithm for maximum transitive subgraph problem. 
\end{theorem}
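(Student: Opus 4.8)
The plan is to build the transitive subgraph out of a \emph{directed cut}, exploiting the fact that every directed cut is automatically transitive, and to make the cut large by feeding the arcs of $D$, viewed as undirected edges, into Lemma~\ref{lem:maximum-trans1}.

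The key structural step I would establish first is the following: for any bipartition $(A,B)$ of the vertex set of $D$, the set $F$ of arcs directed from $A$ to $B$ is a transitive relation. The reason is that no vertex of $B$ is the tail of an arc in $F$, so $F$ contains no pair of arcs of the form $(a,b)$ and $(b,c)$; the defining implication of transitivity then holds vacuously, and $F$ is transitive regardless of how $A$ and $B$ were chosen.

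With this observation in hand, I would apply Lemma~\ref{lem:maximum-trans1} to the undirected (multi)graph obtained from $D$ by forgetting orientations but retaining all $m$ arcs as edges. This yields, in polynomial time (via the folklore deterministic construction noted above), a bipartition $(A,B)$ across which at least $m/2$ arcs pass. Each crossing arc is oriented either from $A$ to $B$ or from $B$ to $A$, so one of these two directed cuts contains at least half of them, that is, at least $m/4$ arcs; by the previous paragraph that directed cut is transitive (using the bipartition $(B,A)$ in the second case). Outputting it gives a transitive subgraph of size at least $m/4$.

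Finally, since any transitive subgraph of $D$ has at most $m$ arcs, the maximum transitive subgraph has size $\mathrm{OPT}\le m$, so a solution of size $\ge m/4\ge \mathrm{OPT}/4$ is a $1/4$-approximation. The only point that needs care — and where I expect the argument to be error-prone rather than genuinely hard — is the bookkeeping of orientations when passing from the undirected bound to a single-direction cut, and ensuring that Lemma~\ref{lem:maximum-trans1} is applied with all $m$ arcs counted, so that the $m/2$ crossing guarantee really translates into the claimed $m/4$ transitive arcs.
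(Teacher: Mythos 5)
Your proposal is correct and follows essentially the same route as the paper: invoke Lemma~\ref{lem:maximum-trans1} to get a bipartition with at least $m/2$ crossing arcs, keep the majority orientation to obtain at least $m/4$ arcs, and observe that this directed cut is transitive because it contains no directed path of length two (the paper phrases this as "no directed paths of length two in the set," which is your vacuous-transitivity observation). Your explicit remark that $\mathrm{OPT}\le m$ is the only step the paper leaves implicit.
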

\begin{proof} From Lemma~\ref{lem:maximum-trans1}, we get a bipartite subgraph of $UG(D)$ of size at least $m/2$.  Now consider the original orientations on this bipartite subgraph. We collect all the edges in the direction that has more number of edges. This set of arcs is of size at least $m/4$ and is transitive as there are no directed paths of length two in the set.
\end{proof}

The obvious question is -- given a digraph with $m$ edges, is there a transitive subgraph of size $tm$ such that $t > 1/4$? We claim that this is not possible. We start by proving the following theorem.

\begin{theorem}\label{theorem:dicut-main}
For every $m$, there exists a digraph $D$ with $m$ edges such that $UG(D)$ is triangle-free and the size of any directed cut in $D$ is at most $m/4 + cm^{4/5}$ for some $c > 0$.
\end{theorem}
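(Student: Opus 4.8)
The plan is to exhibit the required digraph by first fixing a \emph{pseudorandom} triangle-free underlying graph $G = UG(D)$ and then orienting its edges at random, so that no bipartition can collect substantially more than a quarter of the arcs in a single direction. The guiding principle is the classical spectral bound on cuts: if $A$ is the adjacency matrix of a graph on $n$ vertices with smallest eigenvalue $\lambda_{\min}$, then writing $\mathrm{cut}(S) = \tfrac{m}{2} - \tfrac14\,x^\top A x$ for the $\pm 1$ indicator $x$ of $S$ and using $x^\top A x \ge \lambda_{\min}\|x\|^2 = \lambda_{\min} n$ gives $\mathrm{maxcut}(G) \le \tfrac{m}{2} + \tfrac{|\lambda_{\min}|\,n}{4}$. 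So I would choose $G$ to be a $d$-regular triangle-free graph that is as dense as possible while keeping $|\lambda_{\min}|$ small; the extremal regime is $d = \Theta(n^{2/3})$ with $|\lambda_{\min}| = O(\sqrt d) = O(n^{1/3})$. For such a graph $m = \tfrac{nd}{2} = \Theta(n^{5/3})$, hence $n = \Theta(m^{3/5})$, and the spectral bound yields $\mathrm{maxcut}(G) \le \tfrac{m}{2} + O(n^{4/3}) = \tfrac{m}{2} + O(m^{4/5})$.

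Next I would orient each edge of $G$ independently and uniformly at random to obtain $D$. For a fixed subset $S \subseteq V$, the directed cut from $S$ to $\bar S$ is a sum of $|C(S)|$ independent $\{0,1\}$ variables of mean $\tfrac12$, where $C(S)$ is the set of undirected cut edges, so Hoeffding's inequality gives $\Pr[\mathrm{dicut}(S) > \tfrac12|C(S)| + t] \le \exp(-2t^2/m)$ since $|C(S)| \le m$. A union bound over all $2^n$ subsets, together with the choice $t = \Theta(\sqrt{mn})$, makes the total failure probability strictly below $1$; hence some orientation satisfies, simultaneously for every $S$, the inequality $\mathrm{dicut}(S) \le \tfrac12|C(S)| + O(\sqrt{mn}) \le \tfrac12\,\mathrm{maxcut}(G) + O(\sqrt{mn})$. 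Because $n = \Theta(m^{3/5})$ we have $\sqrt{mn} = \Theta(m^{4/5})$, so combining the two estimates gives $\mathrm{maxdicut}(D) \le \tfrac{m}{4} + O(m^{4/5})$, which is the claimed bound. Finally, to cover values of $m$ not of the form $\Theta(n^{5/3})$, I would pad with a disjoint matching or delete a handful of edges; this perturbs $m$ and the error term only by lower-order amounts and preserves triangle-freeness.

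The main obstacle is the very first step: guaranteeing, at the required density, a triangle-free graph pseudorandom enough to force $\mathrm{maxcut}(G) \le \tfrac{m}{2} + O(m^{4/5})$. Triangle-freeness fights density here, since $\mathrm{tr}(A^3) = \sum_i \lambda_i^3 = 0$ forces substantial negative eigenvalues, and it is exactly the balance $d = \Theta(n^{2/3})$, $|\lambda_{\min}| = \Theta(\sqrt d)$ that is extremal. To supply such $G$ I would invoke Alon's explicit triangle-free pseudorandom graphs, or alternatively a probabilistic construction starting from $G(n,p)$ with $p = \Theta(n^{-2/3})$ and deleting one edge per triangle (this removes only $O(n) = o(m)$ edges and leaves the spectral gap essentially intact). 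By contrast the random-orientation argument is routine Chernoff-plus-union-bound, and its $\Theta(\sqrt{mn})$ deviation term matches the target exponent $4/5$ precisely because of the chosen density $m = \Theta(n^{5/3})$.
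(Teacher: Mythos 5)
Your proposal follows essentially the same route as the paper: the paper likewise starts from Alon's triangle-free graphs with $f(G) \le m/2 + c'm^{4/5}$, orients edges uniformly at random, and applies a Chernoff bound plus a union bound over all $2^n$ cuts, with the density $m = \Theta(n^{5/3})$ of Alon's construction making the deviation term $O(m^{4/5})$ --- your additive Hoeffding bookkeeping with $t = \Theta(\sqrt{mn})$ is just the multiplicative ``$\delta$-balanced'' formulation in the paper with $\delta = \alpha/m^{1/5}$. One caveat on a side remark: your fallback construction via $G(n,p)$ with one edge deleted per triangle does not work at the required density, since $p = \Theta(n^{-1/3})$ (i.e.\ $d = \Theta(n^{2/3})$) gives $\Theta(n^3p^3) = \Theta(n^2)$ expected triangles against only $\Theta(n^{5/3})$ edges, so the deletion is not $o(m)$; at this density only the explicit construction behind Alon's theorem is known to work, but since your primary route invokes exactly that, the argument stands.
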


We later observe that there is a one-to-one correspondence between directed cuts and transitive subgraphs in a digraph. Hence, obtaining a transitive subgraph of size better than $m/4$ (in the constant multiple) would contradict this theorem - since this would break the upper bound on the size of any directed cut.

The \textit{max-cut problem} is an extensively well studied problem both in terms of finding good approximation algorithms and estimating its bounds combinatorially. Both its undirected and directed versions are NP-complete. Here we give an upper bound on the size of \textit{directed max-cut} using probabilistic arguments.

The following notation is borrowed from \cite{DBLP:journals/jgt/AlonBGLS07}. Let $G$ be an undirected graph and $U, V$ be a partition of the vertex set of $G$. A \textit{cut} $(U, V)$ is the set of edges with one endpoint in $U$ and other endpoint in $V$. Call $e(U, V)$ the size of cut $(U, V)$. Define
\[
 f(G)  = \max_{(U,V)} e(U, V) \qquad \text{and,} \qquad f(m) = \min_{G : |E(G)| = m} f(G)
\]

Finding a max-cut was proved to be NP-complete in \cite{DBLP:journals/tcs/GareyJS76}. Goemans and Williamson give a semidefinite programming based algorithm in \cite{DBLP:journals/jacm/GoemansW95} to achieve an approximation ratio of $0.878$. Under the Unique Games Conjecture, this is the best possible \cite{DBLP:journals/siamcomp/KhotKMO07}. But a $0.5$-approximation algorithm is straight forward - randomly put each vertex in $U$ or $V$, leading to an expected cut size of $m/2$. Hence, $f(m) \ge m/2$. Various bounds have been proposed for $f(m)$, most notably in \cite{DBLP:journals/jct/ErdosFPS88, DBLP:journals/combinatorica/Alon96}. Following is an upper bound for $f(m)$ in triangle free graph.

\begin{theorem}[Alon \cite{DBLP:journals/combinatorica/Alon96}]\label{lemma:alon-1} There exists  a constant $c' > 0$ such that for every $m$ there exists a triangle-free graph $G$ with $m$ edges satisfying $f(G) \le m/2 + c' m^{4/5}$.
\end{theorem}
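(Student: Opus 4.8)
The plan is to exhibit, for each $m$, a single highly pseudorandom triangle-free graph and to bound its maximum cut through the adjacency spectrum. The starting point is a spectral inequality for max-cut. For a $d$-regular graph $G$ on $N$ vertices with adjacency matrix $A$ and $m = Nd/2$ edges, encode a bipartition $(U,V)$ by $x \in \{-1,+1\}^N$; then $e(U,V) = \frac{1}{4}(2m - x^\top A x) = \frac{m}{2} - \frac{1}{4} x^\top A x$. Since the Rayleigh quotient is bounded below by the least eigenvalue $\lambda_N$, every $\pm 1$ vector satisfies $x^\top A x \ge \lambda_N N$, so $f(G) = \max_{(U,V)} e(U,V) \le \frac{m}{2} - \frac{\lambda_N N}{4} \le \frac{m}{2} + \frac{\lambda N}{4}$, where $\lambda = \max_{i \ge 2}|\lambda_i|$ is the usual pseudorandomness parameter. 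Thus any triangle-free $(N,d,\lambda)$-graph with small $\lambda$ automatically has max cut close to $m/2$, and the whole problem reduces to finding triangle-free graphs with an optimal spectral gap.

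Next I would invoke the existence of triangle-free pseudorandom graphs with extremal parameters: for a dense set of $N$ there is a triangle-free $d$-regular graph on $N$ vertices with $d = \Theta(N^{2/3})$ and $\lambda = O(\sqrt{d}) = O(N^{1/3})$ (Alon's explicit algebraic construction). Plugging these into the spectral bound, here $m = Nd/2 = \Theta(N^{5/3})$, hence $N = \Theta(m^{3/5})$, and the error term becomes $\frac{\lambda N}{4} = O(N^{1/3} \cdot N) = O(N^{4/3}) = O(m^{4/5})$. This gives $f(G) \le m/2 + c' m^{4/5}$ for the graphs in this family, exactly the claimed bound; the exponent $4/5$ is precisely what the degree balance $d \sim N^{2/3}$ forces, which is why the same $4/5$ later propagates to the directed-cut statement of Theorem \ref{theorem:dicut-main}.

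To cover \emph{every} $m$ and not only the sizes produced by the construction, I would take the member of the family with the largest edge count $m_0 \le m$ and attach a disjoint matching of $m - m_0$ new edges. Adding a matching preserves triangle-freeness and raises the maximum cut by exactly $m - m_0$ (each matching edge can be cut independently), so the padded graph $G'$ has $m$ edges and $f(G') \le \frac{m_0}{2} + c' m_0^{4/5} + (m - m_0)$. Because the admissible values of $N$ (prime powers, say) have gaps that are small relative to $N$, consecutive values of $m_0 \sim N^{5/3}$ differ by $o(m^{4/5})$, so $m - m_0 = o(m^{4/5})$ and the entire right-hand side is absorbed into $\frac{m}{2} + c'' m^{4/5}$ after enlarging the constant.

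The hard part will be the second ingredient: constructing, for a dense set of $N$, a triangle-free $d$-regular graph that simultaneously achieves degree $d = \Theta(N^{2/3})$ and optimal spectral gap $\lambda = O(\sqrt{d})$. The spectral max-cut inequality and the matching-padding step are routine, but the existence of such extremal triangle-free expanders is the heart of the theorem and is exactly Alon's contribution in \cite{DBLP:journals/combinatorica/Alon96}. A purely probabilistic substitute does not suffice here: the triangle-free process yields only $d = \tilde{\Theta}(N^{1/2})$ and hence the weaker exponent $5/6$, so the explicit construction reaching the $N^{2/3}$ degree is essential for obtaining $4/5$.
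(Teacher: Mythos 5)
First, a point of comparison: the paper does not prove this statement at all --- it is imported verbatim from Alon's \emph{Bipartite subgraphs} paper. Your reconstruction follows essentially the route of that cited source: the spectral bound $f(G) \le m/2 + \lambda N/4$ (your derivation of $e(U,V) = m/2 - \tfrac{1}{4}x^\top A x$ and the Rayleigh-quotient step are correct), combined with Alon's optimally pseudorandom triangle-free Cayley graphs with $d = \Theta(N^{2/3})$ and $\lambda = O(\sqrt{d})$, and the exponent arithmetic $m = \Theta(N^{5/3})$, $\lambda N = \Theta(N^{4/3}) = \Theta(m^{4/5})$ all checks out. Your closing remark that the probabilistic route (triangle-free process, $d = \widetilde{\Theta}(N^{1/2})$) only yields exponent $5/6$ is also accurate and correctly identifies why the explicit construction is essential.

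There is, however, one concrete gap: the interpolation to \emph{every} $m$. You assert that the admissible vertex counts $N$ have gaps small relative to $N$, so that consecutive achievable edge counts $m_0 \sim N^{5/3}$ differ by $o(m^{4/5})$. This is false for the family you invoke: Alon's construction lives on $N = 2^{3k}$ vertices, so admissible $N$ grow by factors of $8$ and admissible $m_0$ by factors of roughly $32$ --- the gaps are \emph{multiplicative}, and $m - m_0$ can be $\Theta(m)$. Your matching-padding then fails badly, since each matching edge contributes fully to the cut: $f(G') \le m_0/2 + c'm_0^{4/5} + (m - m_0) = m/2 + (m-m_0)/2 + c'm_0^{4/5}$, and $(m - m_0)/2 = \Theta(m)$ destroys the bound. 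The step is salvageable, but by a different device: since $f$ is additive over disjoint unions, write $m = \sum_k c_k a_k + O(1)$ greedily over the admissible sizes $a_k \approx C \cdot 32^k$ with multiplicities $c_k \le 31$, take the corresponding disjoint union of construction graphs, and use a matching only for the $O(1)$ remainder. The error terms sum geometrically, $\sum_k c_k a_k^{4/5} = O\left(m^{4/5}\right)$, which is what actually absorbs everything into $m/2 + c''m^{4/5}$. Without replacing your density claim by some such union argument, the proof as written only covers the sparse set of $m$ realized exactly by the construction.
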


Let $H$ be a directed graph and $U, V$ be a partition of vertex set of $H$. A \textit{cut} of $H$ is similarly defined as before. A \textit{directed cut} $(U, V)$ is the set of edges with starting point in $U$ and ending point in $V$. Call $e(U, V)$ the size of cut $(U, V)$. Define
\[
 g(H)  = \max_{(U,V)} e(U, V) \qquad \text{and,} \qquad g(m) = \min_{H : |E(H)| = m} g(H)
\]

Finding a directed cut of maximum size is NP-complete (via a simple reduction from the max-cut problem). \cite{DBLP:journals/jacm/GoemansW95} gave a $0.796$ approximation for this problem. Again, a $0.25$ approximation is simple, given the $0.5$-approximation of max-cut. Since it is easy to find a cut of size $m/2$ in undirected graphs and a directed cut of size $m/4$ in directed graphs, an obvious question is how much better one can do as a fraction of $m$. Alon proved in \cite{DBLP:journals/combinatorica/Alon96} that the factor $1/2$ can not be improved for max-cuts. We prove that the factor $1/4$ can't be improved for directed max-cuts.

We now prove the following bound. For any $m$, there exists a directed graph with $m$ edges such that for some $c > 0$,
\[
g(H) \le m/4 + c m^{4/5}
\]

The proof idea is as follows. From Theorem \ref{lemma:alon-1}, we know that for every $m$ there exists an undirected graph with $m$ edges, all whose cuts are bounded by $m/2 + o(m)$ in size. For any given $m$ in our case, we start with the undirected graph of Theorem \ref{lemma:alon-1} satisfying the above bound. We orient this graph uniformly at random. We then prove that every cut of size more than $m/4$ will be highly balanced, in the sense that - the cut will have almost the same number of edges going from left to right and right to left. We formalise these ideas below.

We define a notion of balanced cuts of a directed graph and balanced directed graphs.
\begin{definition}[$\delta$-balanced cut]
For a directed graph, consider a cut $(U,V)$. The cut is $\delta$-balanced if 
\[
| e(U, V) - e(V, U) | \le \delta \left( \frac{e(U, V) + e(V, U)}{2} \right)
\]
\end{definition}

\begin{definition}[$(k,\delta)$-balanced graph]
A directed graph $H$ is $(k, \delta)$-balanced if every cut of $H$ of size at least $k$ is $\delta$-balanced.
\end{definition}

\begin{lemma}\label{lemma:graph-with-balanced-cuts}
For any $m$,  $\delta > 0$ and $k \le m$, there exists a directed graph $H$ on $n$ vertices and $m$ edges such that $H$ is $(k, \delta)$-balanced if $n < k \delta^2 / 6$.
\end{lemma}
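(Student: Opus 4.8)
The plan is to prove this by the probabilistic method: fix any undirected graph $G$ on $n$ vertices with $m$ edges, orient each edge independently and uniformly at random, and show that the resulting digraph $H$ is $(k,\delta)$-balanced with positive probability. First I would fix a partition $(U,V)$ of the vertex set and let $t = t(U,V)$ denote the number of edges of $G$ crossing the cut; this quantity depends only on $G$ and the partition, not on the random orientation. Under the orientation, $e(U,V)$ is distributed as $\mathrm{Bin}(t,1/2)$ and $e(V,U) = t - e(U,V)$, so writing $X = e(U,V)$ we have $e(U,V) - e(V,U) = 2X - t$ while $\tfrac12\big(e(U,V)+e(V,U)\big) = t/2$. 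Hence the cut fails to be $\delta$-balanced exactly when $|2X - t| > \delta t/2$, i.e. when $|X - t/2| > \delta t/4$.

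Next I would bound the probability of this failure event with a Chernoff/Hoeffding estimate for the binomial: since $X$ is a sum of $t$ independent $\{0,1\}$ variables with mean $t/2$,
\[
\Pr\!\left[\,|X - t/2| > \tfrac{\delta t}{4}\,\right] \le 2\exp\!\left(-\frac{\delta^2 t}{8}\right).
\]
A cut can have size at least $k$ only if $t \ge k$, because both $e(U,V)$ and $e(V,U)$ are bounded by $t$; so it suffices to control failure for partitions with $t \ge k$, for which the bound above is at most $2\exp(-\delta^2 k/8)$. For partitions with $t < k$ there is nothing to verify, since no cut across them can reach size $k$, and such partitions contribute zero to the union bound.

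Finally I would union-bound over all partitions. Because the $\delta$-balance condition is symmetric in $U$ and $V$, it suffices to range over the at most $2^{\,n-1}$ unordered partitions, giving
\[
\Pr[\,H \text{ is not } (k,\delta)\text{-balanced}\,] \le 2^{\,n-1}\cdot 2\exp\!\left(-\frac{\delta^2 k}{8}\right) = 2^{\,n}\exp\!\left(-\frac{\delta^2 k}{8}\right).
\]
This is strictly less than $1$ precisely when $n \ln 2 < \delta^2 k/8$, that is when $n < \delta^2 k/(8\ln 2)$. Since $8\ln 2 \approx 5.545 < 6$, the hypothesis $n < k\delta^2/6$ implies this inequality, so with positive probability the random orientation yields a $(k,\delta)$-balanced digraph $H$, establishing existence.

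The main obstacle I anticipate is purely bookkeeping the constants: the whole argument hinges on the exponent in the concentration bound beating the entropy term $n\ln 2$ coming from the $2^n$ candidate partitions, and on checking that $8\ln 2 < 6$ so the clean threshold $k\delta^2/6$ really follows (using the factor-of-two saving from the symmetry of the balance condition, which lets me range over unordered partitions and avoids a harmful $+1$ in the exponent for small parameters). I would also be careful to pin down what ``size of a cut'' means (the total crossing count $t$ versus the one-directional count $e(U,V)$), since this justifies restricting to $t \ge k$; in either reading the same Chernoff estimate applies, because additionally demanding $e(U,V) \ge k$ only shrinks the bad event.
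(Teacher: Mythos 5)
Your proposal is correct and follows the same probabilistic-method skeleton as the paper's proof: orient a fixed undirected graph uniformly at random, bound the probability that any single cut with at least $k$ crossing edges fails to be $\delta$-balanced via a concentration inequality, and union-bound over all partitions. Two differences are worth recording. First, the paper orients the specific triangle-free graph of Theorem \ref{lemma:alon-1}, since that is what the downstream Lemma \ref{lemma:graph-with-specific-cuts} needs, whereas you orient an arbitrary $G$ on $n$ vertices and $m$ edges; for the lemma as stated this is all that is required. Second, and more substantively, your use of the additive Hoeffding bound, giving $2\exp(-\delta^2 t/8)$ for a cut with $t$ crossing edges, is what actually makes the stated constant $6$ rigorous. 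The paper rewrites the failure event as $2P[e(U,V) > (1+\delta)|C|/2]$, but since $|e(U,V)-e(V,U)| > \delta|C|/2$ is equivalent to $|e(U,V)-|C|/2| > \delta|C|/4$, the correct threshold is $(1+\delta/2)|C|/2$; with the multiplicative Chernoff bound the exponent then comes out as $\delta^2|C|/24$ rather than the paper's $\delta^2|C|/6$, which would only prove the lemma with $24$ in place of $6$. Your calculation, including the explicit check that $8\ln 2 \approx 5.545 < 6$ and the bookkeeping of the $\ln 2$ factor from the $2^{n}$ union bound (which the paper silently drops), repairs this slip and genuinely yields the threshold $n < k\delta^2/6$. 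None of this affects the downstream applications, where only the order $m^{4/5}$ of the final bound matters, but your version is the tighter and fully correct one.
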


\begin{proof}
For the given $m$, we start with an undirected graph $G$ satisfying the condition in Theorem \ref{lemma:alon-1}. We orient the edges of $G$ uniformly at random and independently and call it $H$. 

Let $C = (U,V)$ be a cut in the undirected graph $G$ of size at least $k$. We first calculate the probability (over the random orientations of $G$) that $C$ is not $\delta$-balanced in $H$. 
\begin{align}
P[C \text{ is not } \delta\text{-balanced}]  
&= P[| e(U, V) - e(V, U) | > \delta (e(U, V) + e(V, U))/2] \\
&= 2 P[e(U,V) > (1 + \delta) |C|/2]  \label{eq:chernoff}
\end{align}

For each edge $e_i$ in the cut $(U,V)$, define a random variable $X_i$ as follows,
\begin{equation*}
X_i = 	\begin{cases}
	1	& \text{if $e_i$ is directed from $U$ to $V$} \\
	0	& \text{otherwise}
	\end{cases}
\end{equation*}
$X_i$'s are i.i.d. random variables with probability $1/2$. Then, $e(U,V) = \sum_{e_i \in (U,V)} X_i$ with mean $|C|/2$. We apply the standard Chernoff bound to get an upper bound for the probability in Equation (\ref{eq:chernoff}), 
\begin{align*}
P[C \text{ is not } \delta\text{-balanced}]  
&\le 2 \exp (- \delta^2 |C| / 6) \\
&\le 2 \exp (- \delta^2 k / 6) 
\end{align*}

We now calculate the probability that the graph $H$ is $(k, \delta)$-balanced.
\begin{align*}
P[H \text{ is } & (k, \delta)\text{-balanced}] \\
&= 1 - P[\text{there exists a cut $C$ in $H$ of size at least $k$ which is not $\delta$ balanced}] \\
&= 1 - P\left[\bigcup_{\text{cut } C,  |C| \ge k}{\text{$C$ is not $\delta$-balanced}} \right] \\
& \ge 1 - 2^n (2 \exp (- \delta^2 k / 6)) \\
& > 0, \quad \text{ if } n < k \delta^2 / 6
\end{align*}
\end{proof}

\noindent
The following lemma gives us a directed graph $H$, such that any cut of size at least $m/4$ in $H$ is `well' balanced. This result is used in proving the final theorem.
\begin{lemma}\label{lemma:graph-with-specific-cuts}
For any $m$, there exists a directed graph $H$ with $m$ edges such that $H$ is $(m/4, \alpha/m^{1/5})$-balanced for some $\alpha > 0$.
\end{lemma}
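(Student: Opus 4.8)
The plan is to derive this lemma directly from Lemma~\ref{lemma:graph-with-balanced-cuts} by choosing its two free parameters appropriately. I would set $k = m/4$ and $\delta = \alpha/m^{1/5}$ for a constant $\alpha > 0$ to be fixed at the end. With these choices the hypothesis $n < k\delta^2/6$ of Lemma~\ref{lemma:graph-with-balanced-cuts} becomes
\[
 n \;<\; \frac{1}{6}\cdot\frac{m}{4}\cdot\frac{\alpha^2}{m^{2/5}} \;=\; \frac{\alpha^2}{24}\, m^{3/5}.
\]
Thus the existence of an $(m/4,\,\alpha/m^{1/5})$-balanced directed graph with $m$ edges reduces entirely to showing that the undirected graph $G$ feeding into Lemma~\ref{lemma:graph-with-balanced-cuts} can be taken to have $n = O(m^{3/5})$ vertices; after that I would simply choose $\alpha$ large enough that $\alpha^2/24$ exceeds the hidden constant, making the displayed inequality hold.

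The next step is to pin down the vertex count of the graph produced by Theorem~\ref{lemma:alon-1}, since that is exactly the graph $G$ oriented at random inside the proof of Lemma~\ref{lemma:graph-with-balanced-cuts}. The extremal triangle-free graphs witnessing $f(G) \le m/2 + c'm^{4/5}$ can be taken to be (essentially) $d$-regular pseudo-random graphs with $d = \Theta(m^{2/5})$, and hence with $n = 2m/d = \Theta(m^{3/5})$ vertices: the surplus estimate $f(G) - m/2 = O(n\sqrt{d}) = O(m^{3/5}\cdot m^{1/5}) = O(m^{4/5})$ is precisely what forces these parameters and is consistent with Alon's construction. Granting a bound $n \le C\,m^{3/5}$ for the appropriate absolute constant $C$, I would take any $\alpha > \sqrt{24\,C}$, so that $C\,m^{3/5} < (\alpha^2/24)\,m^{3/5}$ and the hypothesis $n < k\delta^2/6$ is satisfied. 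Lemma~\ref{lemma:graph-with-balanced-cuts} then yields the desired $(m/4,\,\alpha/m^{1/5})$-balanced directed graph $H$ with $m$ edges.

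I expect the main obstacle to be exactly this vertex-count step, because Theorem~\ref{lemma:alon-1} as cited records only the bound on $f(G)$ and is silent about $n$; the reduction above is vacuous unless $G$ is genuinely dense, i.e.\ $n = O(m^{3/5})$. The real work is therefore to confirm that Alon's construction can be taken dense with $n = \Theta(m^{3/5})$, and in particular that one never needs to pad $G$ with low-degree or isolated vertices (which would blow up the $2^n$ factor in the union bound and destroy the argument). Everything else is routine bookkeeping: substituting the parameters, reading off the constant $C$ from the construction, and selecting $\alpha$ accordingly.
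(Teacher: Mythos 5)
Your proposal is correct and follows essentially the same route as the paper: instantiate Lemma~\ref{lemma:graph-with-balanced-cuts} with $k = m/4$ and $\delta = \alpha/m^{1/5}$, then use the density of Alon's extremal graph to satisfy $n < k\delta^2/6$ by taking $\alpha$ large. The vertex-count issue you flag is exactly the point the paper addresses (in one line) by asserting that the counterexample of Theorem~\ref{lemma:alon-1} has $m = (1/8 + o(1))\,n^{5/3}$, i.e.\ $n = \Theta(m^{3/5})$, whence $\alpha \ge \sqrt{24}/(1/8 + o(1))^{3/10}$ suffices, matching your $\alpha > \sqrt{24\,C}$.
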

\begin{proof}
By choosing $k = m/4$ and $\delta =  \alpha/m^{1/5}$ in Lemma \ref{lemma:graph-with-balanced-cuts}, we get $H$ if 
\begin{align*}
n &\le (m/4) (\alpha/m^{1/5})^2 /6 \\
\text{or, } m &\ge (24 / \alpha^2)^{5/3} n^{5/3}
\end{align*}
The counterexample in Theorem \ref{lemma:alon-1} requires that $m = (1/8 + o(1)) n^{5/3}$. Hence we need $\alpha \ge \sqrt{24}/(1/8 + o(1))^{3/10}$.
\end{proof}

\noindent
We now prove our main claim.

\begin{proofof}{Theorem \ref{theorem:dicut-main}}
For the given $m$, Lemma \ref{lemma:graph-with-specific-cuts} gives a digraph $H$ that is $(m/4, \alpha/m^{1/5})$--balanced, which would imply that every cut of size at least $m/4$ is $\alpha/m^{1/5}$--balanced. Consider any cut $(U,V)$ in $H$. We have,
\begin{align*}
|e(U,V) - e(V,U)| &\le \alpha / m^{1/5} |(U,V)| \\
|e(U,V)|, |e(V,U)| &\le |(U,V)|/2 + (\alpha / m^{1/5}) |(U,V)|/4 \\
&\le (m/2  + c' m^{4/5})/2 + (\alpha / 4m^{1/5}) m \\
& \le m/4 + (c'/2 + \alpha/4) m^{4/5}
\end{align*}

In the second last inequality we use the fact that $|(U,V)| \le m/2  + c' m^{4/5}$ from Theorem \ref{lemma:alon-1}. This completes the proof with the choice of $c =  (c'/2 + \alpha/4)$.


\end{proofof}

%
%
%
%

In order to improve upon the approximation factor, we focus on the class of \textit{triangle-free} directed graphs. First we make the following simple observation about triangle-free directed graphs.

\begin{lemma}\label{lemma:no-length-two-path}
Given a digraph $D$ such that $UG(D)$ is triangle-free; any transitive subgraph of $D$ has no directed paths of length two.
\end{lemma}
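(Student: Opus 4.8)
The plan is to argue by contradiction, turning a directed path of length two into a triangle in $UG(D)$. Suppose some transitive subgraph $T$ of $D$ contained a directed path of length two, i.e.\ two arcs $(a,b)$ and $(b,c)$ of $T$. First I would invoke the definition of transitivity: since $T$ is transitive and contains $(a,b)$ and $(b,c)$, we are forced to have $(a,c)\in T$. Because $T$ is contained in $D$, all three arcs $(a,b)$, $(b,c)$ and $(a,c)$ are then arcs of $D$.

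Next I would pass to the underlying undirected graph. Each of these three arcs projects to an undirected edge, so $UG(D)$ contains the edges $\{a,b\}$, $\{b,c\}$ and $\{a,c\}$. Provided $a$, $b$ and $c$ are three distinct vertices, these edges form a triangle in $UG(D)$, contradicting the hypothesis that $UG(D)$ is triangle-free. This contradiction shows that no transitive subgraph can contain a directed path of length two, which is exactly the claim.

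The one point requiring care — and the only place I would expect any subtlety — is the degenerate configuration in which the two arcs close up into a $2$-cycle, namely $a\to b\to a$ with $a=c$, so that the three projected edges collapse to a single undirected edge rather than a triangle. Under the standard convention that a path visits distinct vertices, such a $2$-cycle is not a directed path of length two and so falls outside the statement; alternatively, transitivity applied to $(a,b)$ and $(b,a)$ would force the self-loop $(a,a)$, which is absent in a loop-free digraph. Disposing of this degenerate case (or excluding it via the path convention) is the single step deserving an explicit remark; the remainder of the argument is immediate.
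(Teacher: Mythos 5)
Your proof is correct and supplies exactly the intended argument: the paper states this lemma as a ``simple observation'' and gives no proof at all, and your two-step contradiction (transitivity forces the closing arc $(a,c)$, whose underlying edge together with $\{a,b\}$ and $\{b,c\}$ completes a triangle in $UG(D)$) is precisely what the authors left implicit. One caveat on your degenerate case: in this paper relations may contain pairs $(i,i)$ (cf.\ the remark on $a_{ii}=1$ in the analysis of Algorithm~1), so the digraphs here are not assumed loop-free; hence your primary disposal of the $2$-cycle $a\to b\to a$ --- via the convention that a directed path has distinct vertices --- is the one to rely on, while the fallback appeal to loop-freeness would not be available in the paper's setting.
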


Let $G$ be a digraph and $U$, $V$ be a partition of the vertex set of $H$. A \textit{directed cut} $(U,V)$ is the set of edges with a starting in $U$ and ending point in $V$. The MAX-DICUT problem is the problem of obtaining a largest directed cut in a graph. This is NP-hard. \cite{DBLP:conf/ipco/LewinLZ02} gives an approximation algorithm for the MAX-DICUT problem.

\begin{theorem}[see \cite{DBLP:conf/ipco/LewinLZ02}]
There exists a 0.874-approximation algorithm for the MAX-DICUT problem.
\end{theorem}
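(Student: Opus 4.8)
Since this is an imported black-box result cited from \cite{DBLP:conf/ipco/LewinLZ02}, the plan is to reconstruct its proof along the standard semidefinite-programming route. First I would model MAX-DICUT as a quadratic program over $\pm1$ variables. Introduce a reference variable $x_0 \in \{-1,+1\}$ and one variable $x_i \in \{-1,+1\}$ per vertex $i$, adopting the convention that $i \in U$ precisely when $x_i = x_0$. An arc $(i,j)$ then contributes to the directed cut $(U,V)$ exactly when $x_i = x_0$ and $x_j = -x_0$, so its indicator equals $\frac14 (1 + x_0 x_i)(1 - x_0 x_j)$. Summing over all arcs expresses the dicut size as a homogeneous quadratic form in the products $x_0 x_i$, $x_0 x_j$ and $x_i x_j$.

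Next I would relax the integrality: replace each $x_i$ by a unit vector $v_i \in \mathbb{R}^{n+1}$ (with $v_0$ replacing $x_0$) and every product $x_p x_q$ by the inner product $v_p \cdot v_q$. The resulting optimization over positive semidefinite Gram matrices with unit diagonal is a semidefinite program solvable to arbitrary accuracy in polynomial time, and its optimum $\mathrm{SDP}$ is an upper bound on the true optimum $\mathrm{OPT}$. The decisive ingredient is then the rounding. A single uniformly random hyperplane through the origin — placing $i \in U$ iff $v_i$ lands on the same side as $v_0$ — is the Goemans--Williamson scheme and only yields ratio $0.796$ for MAX-DICUT. To reach $0.874$ I would instead use the rotation-based rounding of \cite{DBLP:conf/ipco/LewinLZ02}: before cutting, rotate each $v_i$ outward by an amount depending on its angle with $v_0$, and only then apply the random hyperplane. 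For a fixed arc the probability that it is cut becomes an explicit function of the three inner products $v_0 \cdot v_i$, $v_0 \cdot v_j$, $v_i \cdot v_j$, and by linearity of expectation the expected dicut size is the sum of these per-arc probabilities.

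The hard part will be the rounding analysis. I would need to show that, for an optimally chosen rotation function, the ratio of the per-arc cut probability to the corresponding per-arc SDP term $\frac14(1 + v_0\cdot v_i - v_0\cdot v_j - v_i\cdot v_j)$ is at least $0.874$ uniformly over all feasible triples of inner products, i.e. those arising from an actual Gram configuration. After eliminating one variable this reduces to minimizing an awkward bivariate function over a bounded region, and pinning the worst case down to $0.874$ requires both the right choice of rotation parameters and a careful, partly numerical, verification. Once this pointwise bound holds, summation gives $\mathbb{E}[\text{dicut}] \ge 0.874\,\mathrm{SDP} \ge 0.874\,\mathrm{OPT}$, and derandomizing by the method of conditional expectations over the choice of hyperplane produces a deterministic $0.874$-approximation. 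Finally, combined with Lemma~\ref{lemma:no-length-two-path}, which identifies the transitive subgraphs of a triangle-free digraph with its directed cuts, this guarantee transfers verbatim to the maximum transitive subgraph problem on triangle-free digraphs.
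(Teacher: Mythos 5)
The paper does not actually prove this theorem: it is imported as a black box from \cite{DBLP:conf/ipco/LewinLZ02} and used only through Lemma~\ref{lemma:directed-cut-equal-transitive-sub}, so your reconstruction is being compared against the cited source rather than against anything in the paper. Your outline does follow the right route (quadratic formulation with a reference variable $x_0$, per-arc value $\frac14(1 + x_0x_i - x_0x_j - x_ix_j)$, vector relaxation, $v_0$-dependent rounding, pointwise ratio bound, linearity of expectation), and your final paragraph matches exactly how the paper deploys the result.

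There is, however, one concrete gap in the sketch. You claim the pointwise $0.874$ bound should hold ``uniformly over all feasible triples of inner products, i.e.\ those arising from an actual Gram configuration.'' Positive semidefiniteness of the Gram matrix is not enough: the Lewin--Livnat--Zwick relaxation crucially adds the valid linear ``triangle'' constraints involving $v_0$, such as $v_0\cdot v_i + v_0\cdot v_j + v_i\cdot v_j \ge -1$ and its sign variants (valid for $\pm1$ assignments but not implied by PSD-ness). Over merely PSD-feasible triples the worst-case ratio of any such rounding drops below $0.874$; it is precisely this strengthening that moved MAX-DICUT from Goemans--Williamson's $0.796$ to Feige--Goemans' $0.859$ and then to $0.874$, so omitting it breaks the decisive step of your plan. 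Two smaller inaccuracies: the $0.8740$ constant in the cited paper is attained by a rounding from their $\mathcal{THRESH}^-$ family --- a random hyperplane with a per-vertex threshold depending on $v_0\cdot v_i$ --- which is strictly more general than ``rotate, then cut by a uniform hyperplane'' (pure rotation rounding is reported there to give a slightly worse constant); and derandomizing hyperplane rounding is not a routine conditional-expectations argument (it requires the machinery of Mahajan and Ramesh), though for the theorem as stated, and for the paper's use of it, a randomized algorithm would already suffice.
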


As a corollary of Lemma \ref{lemma:no-length-two-path}, we have the following.
\begin{lemma}\label{lemma:directed-cut-equal-transitive-sub}
In a digraph $D$ such that $UG(D)$ is triangle-free, every directed cut  of $D$ is also a transitive subgraph of $D$.
\end{lemma}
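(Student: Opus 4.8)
The plan is to show that a directed cut can never contain a directed path of length two, from which transitivity follows vacuously; notably, the triangle-free hypothesis is not even needed for this direction of the correspondence.

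First I would unwind the definition of a directed cut $(U,V)$: it consists precisely of those edges whose starting point lies in $U$ and whose ending point lies in $V$, where $U$ and $V$ partition the vertex set of $D$. The key observation is then the following. Suppose, for contradiction, that the cut contained two edges $(a,b)$ and $(b,c)$ forming a directed path of length two. Since $(a,b)$ lies in the cut, its head $b$ lies in $V$; since $(b,c)$ lies in the cut, its tail $b$ lies in $U$. But $U \cap V = \emptyset$ because $(U,V)$ is a partition, so $b$ cannot lie in both, a contradiction. Hence the edge set of any directed cut has no directed path of length two.

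Next I would invoke the elementary fact that any edge set with no directed path of length two is automatically transitive: the defining implication of transitivity is triggered only by a pair $(a,b),(b,c)$ of edges sharing the middle vertex $b$, and we have just shown that no such pair exists, so the implication holds vacuously. Combining this with the previous step shows that every directed cut of $D$ is transitive; and since a directed cut is by construction a sub-relation of $D$, it is a transitive subgraph of $D$, as claimed.

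I do not anticipate a genuine obstacle here; the argument closely mirrors the reasoning behind Lemma \ref{lemma:no-length-two-path}. The only point demanding care is the vacuous-truth step — making explicit that the absence of any length-two path is precisely what negates the hypothesis of the transitivity condition, so that one never needs to exhibit a closing edge $(a,c)$.
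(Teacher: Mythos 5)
Your proof is correct, and it is the right argument: the middle vertex of a directed path of length two would have to lie in both $U$ and $V$, which is impossible, so the cut is vacuously transitive. The paper itself never writes this argument out -- it merely asserts the lemma ``as a corollary of Lemma \ref{lemma:no-length-two-path}'' -- and in fact that attribution is logically backwards: Lemma \ref{lemma:no-length-two-path} gives the \emph{converse} direction (transitive subgraphs in the triangle-free setting have no directed paths of length two, hence correspond to directed cuts), whereas the statement at hand needs exactly your direct cut-has-no-$P_2$ observation. Your remark that the triangle-free hypothesis is superfluous for this direction is also accurate and worth making explicit: triangle-freeness is only needed for the other inclusion, which together with your lemma is what lets the paper identify the maximum transitive subgraph problem with MAX-DICUT on such digraphs. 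So your write-up supplies the proof the paper elides, and sharpens it slightly by isolating which hypothesis each direction actually uses.
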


This implies that finding the maximum transitive subgraph is same as the MAX-DICUT problem for digraphs $D$ with $UG(D)$ being triangle free.

\begin{theorem}
There exists a 0.874-approximation algorithm for finding the maximum transitive subgraph in a digraph $D$ such that $UG(D)$ is triangle-free.
\end{theorem}


\section{Conclusion}

We have presented an algorithm that given a directed graph on $n$ vertices and $m$ arcs outputs a maximal transitive subgraph is time $O(n^2 + nm)$. This is the first algorithm for finding maximal transitive subgraph that we know of, that does better than the usual greedy algorithm. Although it might be the case that this is an optimal algorithm, we are unable to prove a lower bound for this problem. 

There are many related problems for which one might expect similar kind of algorithm - that is $O(n^3)$ time algorithm that does better than the usual greedy algorithm. We would like to present them as open problems: 

\begin{enumerate} 
\item Given a directed graph $G$ on $n$ vertices and a transitive subgraph $H$ of $G$, check if $H$ is a maximal transitive subgraph of $G$.

\item Given a directed graph $G$ on $n$ vertices and a subgraph $H$ of $G$, find a maximal transitive subgraph of $G$ that also contains $H$.
\end{enumerate} 

Obviously an algorithm for the second problem would also give an algorithm for the first problem.

In the case of maximum transitive subgraph, the central question is obtaining a better approximation ratio than 1/4 in a general digraph.

\section{Acknowledgement}
This research was funded by the Chennai Mathematical Institute, H1, SIPCOT IT Park, Siruseri, Kelambakkam, 603103, India.

\bibliographystyle{elsarticle-num}  
\bibliography{transitive}




\end{document}